\documentclass[12pt,reqno]{amsart}  

\setlength{\textheight}{23cm}
\setlength{\textwidth}{16cm}
\setlength{\topmargin}{-0.8cm}
\setlength{\parskip}{0.3\baselineskip}
\hoffset=-1.4cm

\usepackage{amsmath,amsthm,amsfonts,amssymb}

\theoremstyle{plain}
\newtheorem{thm}{Theorem}[section]
\newtheorem{prop}[thm]{Proposition}
\newtheorem{lem}[thm]{Lemma}

\theoremstyle{definition}
\newtheorem{definition}[equation]{Definition}

\numberwithin{equation}{section}

\newcommand{\id}{\mathrm{id}}
\newcommand{\diag}{\mathrm{diag}}

\newcommand\bb[1]{{\text{\bf#1}}}
\newcommand\bbz{\mathbb{Z}} 
\newcommand\bbr{\mathbb{R}} 
\newcommand\bbp{\mathbb{P}}
\newcommand\bbi{\bb{I}}
\newcommand{\scp}[1]{\langle #1\rangle}

\begin{document}
\baselineskip=16pt
\title[Monopoles on Sasakian three-folds] {Monopoles on Sasakian three-folds}

 \author[I. Biswas]{Indranil Biswas}

\address{School of Mathematics, Tata Institute of Fundamental Research,
Homi Bhabha Road, Bombay 400005, India}

\email{indranil@math.tifr.res.in}

\author[J. Hurtubise]{Jacques Hurtubise}

\address{Department of Mathematics, McGill University, Burnside
Hall, 805 Sherbrooke St. W., Montreal, Que. H3A 2K6, Canada}

\email{jacques.hurtubise@mcgill.ca}

\subjclass[2000]{14H60, 14F05}

\keywords{Monopoles, Sasakian three-folds, Dirac-type singularity, gauge theory}

\date{}

\begin{abstract}
We consider monopoles with singularities of Dirac type on 
quasiregular Sasakian three-folds fibering 
over a compact Riemann surface $\Sigma$, for example the Hopf fibration
$S^3\longrightarrow S^2$. We show that these correspond to holomorphic
objects on $\Sigma$, which we call twisted bundle triples. These are somewhat
similar to Murray's bundle gerbes. A spectral curve construction allows us
to classify these structures, and, conjecturally, monopoles.
\end{abstract}

\maketitle

\section{Introduction}

Since being introduced in the 1930's by Dirac, monopoles on a three-fold have occupied a 
place of privilege in the understanding of gauge theory. Dirac's monopoles were 
singular, defined over $\bbr^3$, and attached to the gauge group $U(1)$. In the 1970's, 
it was realized that introducing a non-Abelian gauge group allowed one to consider, on 
$\bbr^3$ at least, non-singular solutions. Like their close cousins,
namely instantons on 
$\bbr^4$, monopoles on $\bbr^3$ allow a holomorphic interpretation, and this was used to 
great effect by several authors, for example Ward \cite{Ward} and most notably Hitchin 
\cite{hitchinGeodesics} in constructing solutions. In parallel, work of Nahm 
\cite{Nahm}, and then Hitchin \cite{Hitchin-construction}, tied this complex 
interpretation to the Nahm transform, giving a very effective dictionary which allowed 
the classification of monopoles in 1983 by Donaldson \cite{DonaldsonSU2}. The work, 
originally done for the gauge group ${\rm SU}(2)$, was extended to classical gauge groups by 
Murray and Hurtubise in a series of papers \cite{Murray, HuMu, hurtubiseClassification}, 
and then by Jarvis to arbitrary reductive groups \cite{JarvisEuclideanMonopoles}.

Of course, one is not tied to $\bbr^3$, and one of the early extensions was to 
hyperbolic space; this case was studied in a beautiful paper by Atiyah 
\cite{Atiyah-monopoles-hyperbolic}. One can show, however, that non-singular and 
non-trivial monopoles cannot exist unless the space has a suitably large infinity. In 
particular one cannot have them on a compact manifold. Thus, in the latter case, one is 
led to admitting some singularities, and those which first appeared in the work of 
Dirac, and their analogues for general gauge groups, seem to be the most appropriate.

It was realized quite early on that the Dirac-type singularity leads to some most 
interesting geometry. Indeed, Kronheimer, in his Oxford MSc thesis, \cite{kronheimerMSc}, 
showed that the geometry of these Dirac monopoles is tied intimately to that of the Hopf 
fibration, and that one can define a lift of the singular monopole to a nontrivial 
fibration which smooths out the singularity. Pauly expanded and developed this idea in 
\cite{pauly}. Meanwhile, the singular monopoles turned up in a variety of contexts, 
linked for example to Nahm transforms of smooth configurations by Charbonneau 
\cite{benoitpaper}, and to gravitational instantons by Cherkis-Kapustin 
\cite{cherkis1999}. Most spectacularly, they mediate Hecke transforms, reinterpreted as 
a scattering by the monopole, and are an important ingredient in Witten-Kapustin's gauge 
theoretic interpretation of the geometric Langlands correspondence 
\cite{geometricLanglands}.

In this last interpretation, one is looking at singular monopoles on the product of a 
Riemann surface and an interval in $\mathbb R$; this was examined by Norbury 
\cite{Norbury-monopoles-boundary}. Charbonneau and Hurtubise, in \cite{ChHu}, then took 
up the case of self-Hecke transformations --- monopoles on the product of a Riemann 
surface and a circle. They proved a Kobayashi-Hitchin type correspondence for the 
singular monopoles, showing that they correspond to holomorphic vector bundles on a 
Riemann surface equipped with a meromorphic automorphism, thought of as the self-Hecke 
correspondence. These Kobayashi-Hitchin correspondences are a recurrent theme in gauge 
theory over K\"ahler manifolds, linking gauge theoretic solutions to certain field 
equations (the Hermite-Einstein condition) to holomorphic objects, allowing us, for 
example, to classify them. This has been developed most notably by Donaldson 
\cite{Donaldson-surfaces}, Uhlenbeck-Yau \cite{UY}, and Simpson 
\cite{Simpson-Hodge-structures} (see also \cite{Lubke-Teleman}).

Of course, there is no reason to restrict one's attention to the trivial line bundle 
over a Riemann surface, and the subject of this paper is to see what happens  over a more general
circle bundle $X$, which we will take to be positive. It turns out that the relevant 
geometry for our circle bundle is Sasakian geometry. A Sasakian structure exists on an 
arbitrary line bundle of positive degree on a compact Riemann surface. The relevant 
structure on the four-fold $X\times S^1$, instead of a K\"ahler structure, will be a 
Gauduchon metric. We mention that a simple example is the round three-sphere, fibering over the 
two-sphere; the four-fold is the Hopf surface. The Kobayashi-Hitchin paradigm 
extends to this situation, thanks to work of Buchdahl \cite{Buchdahl}. 
Unfortunately, his results only apply to the non-singular case. In the case studied 
by Charbonneau and Hurtubise \cite{ChHu}, the four-fold is K\"ahler, and the result of Simpson, 
which allows singularities, enables one to conclude that the
Kobayashi-Hitchin type correspondence is 
bijective. The corresponding generalization with singularities of Buchdahl's theorem 
remains unproven, though the full generalization of Simpson's results to the 
non-singular case with a Gauduchon metric in any dimension has recently been given 
by Jacob \cite{Jacob}.

Nevertheless, we can show that there are quite interesting holomorphic objects, of a 
fairly novel type, attached to the gauge fields, and we can show that the correspondence 
is injective. These objects can either be thought of as living on the three-fold $X$ 
(where one must give a suitable definition of holomorphic objects) or on the base Riemann 
surface $\Sigma$. In the latter context, they give objects which are rather reminiscent 
of Murray's bundle gerbes \cite{Murray2}.

Section 2 of this paper is devoted to recalling the necessary Sasakian and Gauduchon 
geometry on our circle bundle $X$. Section 3 considers monopoles on this three-fold, 
and defines the holomorphic objects on the three-fold which correspond to it. Section 4 
is quite brief and discusses the Kobayashi-Hitchin correspondence. Section 5 discusses 
the links between the holomorphic objects we have defined, and their reductions to the 
Riemann surface. Section 6 discusses the more general case of a circle bundle over an 
orbifold Riemann surface.

\section{Sasakian geometry}

\subsection{Quasiregular Sasakian manifolds}

Let $X$ be a compact quasiregular Sasakian three-fold, with metric $g$. These are 
manifolds with a contact structure and a metric, compatible in the sense that there is a 
unit (Reeb) vector field $\xi$ orthogonal to the contact planes, which acts on the 
manifold as a Killing field. (A useful reference is the book of Boyer and Galicki
\cite{BG}.) The orbits of $\xi$ are compact; under this hypothesis the manifold has a circle 
action by isometries, and the quotient
\begin{equation}\label{e1}
\pi\,:\,X\,\longrightarrow\, \Sigma
\end{equation}
by the flow is a compact orbifold Riemann surface equipped with a K\"ahler
form $\omega$. The K\"ahler
structure $\omega$ is uniquely determined by the condition
that $\pi$ is a Riemannian submersion with respect to $g$ and $\omega$.
The quasiregular Sasakian three-fold $(X\, ,g\, ,\xi)$ is called {\it regular}
if the circle action on $X$ is free. So a regular Sasakian three-fold is a principal
$S^1$-bundle over a compact Riemann surface $\Sigma$ equipped with a
K\"ahler form $\omega$. From now on, until the end of Section \ref{se5}, we specialize to the regular
Sasakian manifolds.

Let $\alpha$ be the normalized contact form on $X$, so 
$$\alpha(\xi) \,=\, 1\, ,\ ~\alpha|_{\xi^\perp}\,=\, 0\, ,$$
where $\xi$ is the above Killing field. If, in addition, we take, locally,  one-forms $dz,
d\overline z$ on $\Sigma$, and pull them back to $X$ using $\pi$ in \eqref{e1}, one has a
local basis of complex $1$-forms 
 $$\alpha\, , dz\, , d\overline z$$ on $X$. Let 
 $$\xi\, , v_z\, , v_{\overline z}$$
be the dual basis of vector fields. While $\xi$ is the Reeb vector field, both $v_z\, ,
v_{\overline z}$ are orthogonal to $\xi$, as well as being mutually orthogonal.
The contact property tells us that the 2-form $d\alpha$ is non-degenerate.
In addition, it is a lift from the Riemann surface. The form on $X$ which pulls back
to $d\alpha$ will be denoted by $\omega$. One has a basis 
 $$d\alpha\,=\, \pi^*\omega \,=\,  \sqrt{-1} \mu(z,\overline{z})dz\wedge d\overline z\,,~\
\alpha\wedge dz\, ,~\ \alpha\wedge d\overline{z}$$
for the complex two-forms $X$, where $\omega$ is the above K\"ahler form
on $\Sigma$. The Lie brackets of the vector fields are
  $$[v_z,v_{\overline{z}}] \,=\, -\sqrt{-1}\mu(z,\overline{z}) \xi\, ,\quad
[\xi,v_z]\,=\, [\xi,v_{\overline{z}}]\,=\, 0\, .$$
These follow using the equation 
$\alpha([v_z,v_{\overline{z}}])\,=\,-d\alpha(v_z,v_{\overline{z}})$ and the fact that 
the vector field $\xi$ is Killing. In these bases, one can give the metric by
$$ g\,=\, \pi^*h + \alpha\otimes \alpha$$
with $h$ being the Hermitian structure on $\Sigma$ associated to $\omega$.
Finally, one has a volume form $d\alpha\wedge\alpha$ on the three-fold $X$.

\subsection{Sasakian geometry and K\"ahler geometry}

One defining property of a Sasakian three-fold is that on the cone $M\,=\, \bbr^+\times X$
over $X$, the metric $dr^2 + r^2g$ is K\"ahler, where $g$ is the Riemannian metric on
$X$. The K\"ahler form is
\begin{equation}\label{f2}
\Omega \,=\, r^2d\alpha -2r\alpha\wedge dr \,=\, r^2 \pi^*\omega -2r\alpha\wedge dr\, .
\end{equation}

On the cone $M$, there is a basis of vector fields $\frac{\xi}{r}\, ,
\frac{\partial}{\partial 
r}\, , \frac{v_z}{r}\, , \frac{v_{\overline z}}{r}$. They have constant norm in $r$,
and the first two are mutually orthogonal while being orthogonal to the others.
 
\subsubsection{Complex structures}

We have the complex structure
on $X$ inherited from the Riemann surface $\Sigma$, to which
one adds $J(\frac{\partial}{\partial r}) \,=\, \frac{\xi}{r}$, and so  
$$J(\frac{\xi}{r} + \sqrt{-1}  \frac{\partial}{\partial r})\,=\, \sqrt{-1}
(\frac{\xi}{r}+\sqrt{-1}\frac{\partial}{\partial r})\, ,\ J(\frac{v_z}{r}) \,=\,
\sqrt{-1} \frac{v_z}{r}\, ,$$ spanning the $(1,0)$
part of the complexified tangent space, dually, $$J(r\alpha-\sqrt{-1}  {dr} )\,=\,
\sqrt{-1}(r\alpha-\sqrt{-1}  {dr} )\, ,\ J(rdz) \,=\,\sqrt{-1}(rdz)\, ,$$  spanning the
$(1,0)$ forms, and
$$r\alpha+\sqrt{-1} {dr}\,  , \ rd\overline z$$
for the $(0,1)$ forms. The real subspace of the $(1,1)$ forms is spanned by
$$r\alpha\wedge  {dr}\, ,\ r^2d\alpha\,=\, \sqrt{-1}\mu r^2 dz\wedge d\overline{z}
\, ,\ \sigma_3 \,=\, Re((r\alpha-\sqrt{-1} dr)\wedge d\overline{z})\, ,
$$
$$
\sigma_4
\,=\, Re((r\alpha+\sqrt{-1} dr )\wedge dz)\, .$$

We have the volume form on $M$
$$\Omega\wedge \Omega \,=\, -4r^3 d\alpha\wedge \alpha \wedge dr\, ,$$ 
where $\Omega$ is constructed in \eqref{f2}.

\subsection{Sasakian geometry and Gauduchon geometry} 

A pointwise positive $(1\, ,1)$--form $\zeta$ on a complex surface is called
\textit{Gauduchon} if $\partial\overline{\partial}\zeta\,=\, 0$.

Now let us consider instead of $\Omega$, the form on $M$
\begin{equation}\label{f1}
\widetilde \Omega \,=\, \frac{1}{r^2}\Omega \,=\, d\alpha -2 \alpha \wedge \frac{dr}{r}
\,=\, d\alpha -2 \alpha \wedge dt\, ,
\end{equation}
setting $t\,=\, \log(r)$.

\begin{lem}\label{lem1}
The following holds:
$$\partial\overline{\partial}(\frac{1}{r^2}) \,=\, -\frac{\sqrt{-1}}{r^2}
(d\alpha +2\alpha\wedge dt)\, .$$
As a consequence, $$\partial\overline{\partial}(\frac{1}{r^2}\Omega)\,=\,
\partial\overline{\partial}(\frac{1}{r^2})\wedge \Omega \,=\, 0\, ,$$
so that the form $\widetilde\Omega$ in \eqref{f1} is Gauduchon.
\end{lem}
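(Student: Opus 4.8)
The plan is to verify the first identity by a direct computation in the local frame on the cone $M = \bbr^+ \times X$, and then to deduce the two consequences by simple type considerations. The key point is that $\frac{1}{r^2}$, being a function of $r$ alone, has an exterior derivative that is easy to express in the frame $\alpha, dz, d\overline z, dr$ introduced above; one just has to split this into its $(1,0)$ and $(0,1)$ parts using the explicit description of the complex structure $J$ on $M$, where $(1,0)$-forms are spanned by $r\alpha - \sqrt{-1}\,dr$ and $r\,dz$, and $(0,1)$-forms by $r\alpha + \sqrt{-1}\,dr$ and $r\,d\overline z$.

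First I would write $d(\frac{1}{r^2}) = -\frac{2}{r^3}\,dr = -\frac{2}{r^2}\,dt$, using $t = \log r$. To extract $\overline\partial(\frac{1}{r^2})$, I observe that $dt = \frac{dr}{r}$ must be re-expressed: from the spanning $(1,0)$-form $r\alpha - \sqrt{-1}\,dr$ and $(0,1)$-form $r\alpha + \sqrt{-1}\,dr$ one gets $dr = \frac{\sqrt{-1}}{2}\big((r\alpha-\sqrt{-1}\,dr) - (r\alpha+\sqrt{-1}\,dr)\big)$, so the $(0,1)$-part of $d(\frac{1}{r^2})$ is $\overline\partial(\frac{1}{r^2}) = -\frac{2}{r^2}\cdot\frac{\sqrt{-1}}{2r}\cdot(-(r\alpha+\sqrt{-1}\,dr)) = \frac{\sqrt{-1}}{r^2}(\alpha + \frac{\sqrt{-1}}{r}dr) = \frac{\sqrt{-1}}{r^2}(\alpha + \sqrt{-1}\,dt)$. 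Then $\partial\overline\partial(\frac{1}{r^2}) = \partial\big(\frac{\sqrt{-1}}{r^2}(\alpha+\sqrt{-1}\,dt)\big)$, which expands as $\frac{\sqrt{-1}}{r^2}\big(\partial\alpha + \sqrt{-1}\,\partial dt\big) + \partial(\frac{\sqrt{-1}}{r^2})\wedge(\alpha+\sqrt{-1}\,dt)$. Using $d\alpha = \pi^*\omega$ (a $(1,1)$-form on $X$, hence of type $(1,1)$ on $M$ so $\partial\alpha$ picks up the relevant half), $d\,dt = 0$, and the already-computed $\partial(\frac{1}{r^2})$ which is the conjugate expression $-\frac{\sqrt{-1}}{r^2}(\alpha - \sqrt{-1}\,dt)$, the bookkeeping collapses to $\partial\overline\partial(\frac{1}{r^2}) = -\frac{\sqrt{-1}}{r^2}(d\alpha + 2\alpha\wedge dt)$, as claimed. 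The one genuinely fiddly point — and the step I expect to be the main obstacle — is keeping the type decomposition of $\partial\alpha$ and the cross term $\partial(\frac{1}{r^2})\wedge\alpha$ straight, since $\alpha$ is neither $(1,0)$ nor $(0,1)$ on $M$; it is cleanest to first rewrite everything in the unambiguous frame $\{r\alpha \pm \sqrt{-1}\,dr,\ r\,dz,\ r\,d\overline z\}$ before applying $\partial$ and $\overline\partial$.

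For the consequences, write $\frac{1}{r^2}\Omega = \widetilde\Omega$ and use the Leibniz rule $\partial\overline\partial(\frac{1}{r^2}\Omega) = \partial\overline\partial(\frac{1}{r^2})\wedge\Omega - \partial(\frac{1}{r^2})\wedge\overline\partial\Omega - \overline\partial(\frac{1}{r^2})\wedge\partial\Omega + \frac{1}{r^2}\partial\overline\partial\Omega$. Since $\Omega$ is the Kähler form on $M$, it is $d$-closed, so $\partial\Omega = \overline\partial\Omega = 0$ and all but the first term vanish; thus $\partial\overline\partial(\frac{1}{r^2}\Omega) = \partial\overline\partial(\frac{1}{r^2})\wedge\Omega$. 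Finally, substituting the first identity, $\partial\overline\partial(\frac{1}{r^2})\wedge\Omega = -\frac{\sqrt{-1}}{r^2}(d\alpha+2\alpha\wedge dt)\wedge(r^2 d\alpha - 2r\,\alpha\wedge dr)$; expanding, the $d\alpha\wedge d\alpha$ term vanishes for dimension reasons on the real $3$-manifold $X$ (it is a $4$-form pulled back from $\Sigma$), $d\alpha\wedge\alpha\wedge dr$ cancels against $\alpha\wedge dt\wedge d\alpha$ after using $dr = r\,dt$, and $\alpha\wedge dt\wedge\alpha\wedge dr = 0$, so the whole expression is zero. Hence $\partial\overline\partial\widetilde\Omega = 0$; since $\widetilde\Omega = \frac{1}{r^2}\Omega$ is a positive multiple of the Kähler form it is a positive $(1,1)$-form, so by the definition recalled just before the lemma, $\widetilde\Omega$ is Gauduchon.
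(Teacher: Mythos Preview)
The paper states this lemma without proof, so there is no argument to compare against; your direct computation in the frame $\{r\alpha\pm\sqrt{-1}\,dr,\ r\,dz,\ r\,d\overline z\}$ is the natural approach and is sound in outline. The reductions you give --- extracting $\overline\partial(\tfrac{1}{r^2})=\tfrac{\sqrt{-1}}{r^2}(\alpha+\sqrt{-1}\,dt)$, using that $\alpha+\sqrt{-1}\,dt$ is $(0,1)$ so that $\partial(\alpha+\sqrt{-1}\,dt)$ is just the $(1,1)$ part of $d\alpha$ (namely all of it), and then wedging with the closed form $\Omega$ --- are exactly right, as is the cancellation argument for $(d\alpha+2\alpha\wedge dt)\wedge\Omega=0$.

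One caution: the step you summarize as ``the bookkeeping collapses to $-\tfrac{\sqrt{-1}}{r^2}(d\alpha+2\alpha\wedge dt)$'' does not quite give the stated sign. Carrying it through,
\[
\partial\overline\partial\Bigl(\tfrac{1}{r^2}\Bigr)
=\partial\Bigl(\tfrac{\sqrt{-1}}{r^2}\Bigr)\wedge(\alpha+\sqrt{-1}\,dt)
+\tfrac{\sqrt{-1}}{r^2}\,\partial(\alpha+\sqrt{-1}\,dt)
=\tfrac{1}{r^2}\cdot 2\sqrt{-1}\,\alpha\wedge dt+\tfrac{\sqrt{-1}}{r^2}\,d\alpha
=+\tfrac{\sqrt{-1}}{r^2}(d\alpha+2\alpha\wedge dt),
\]
the opposite sign from the displayed formula in the lemma. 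This is almost certainly a typo in the paper rather than an error in your method, and in any case it is irrelevant to the conclusion: the vanishing of the wedge with $\Omega$ and hence the Gauduchon property of $\widetilde\Omega$ are unaffected by the overall sign. (A second, harmless slip: in your Leibniz expansion the cross term $\partial(\tfrac{1}{r^2})\wedge\overline\partial\Omega$ should carry a $+$ rather than a $-$, but it vanishes anyway since $\Omega$ is K\"ahler.)
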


The form $\widetilde \Omega$ in \eqref{f1} is invariant under the flow of the Reeb vector field, and
in addition is also invariant in the additive time ($t$-) direction. We note that Lemma \ref{lem1} shows
that there is a time invariant Gauduchon metric $\widetilde \Omega$ on the manifold
\begin{equation}\label{N}
N\,=\, X\times S^1\, .
\end{equation}
It is this compact Gauduchon surface $(N\, , \widetilde\Omega)$ that will be used.

Before giving a few geometric properties, we first re-scale the bases given above on 
$X$. We had one-forms $dz, d\overline z$ on $\Sigma$, lifted to $X$, giving a local 
basis of forms $\alpha, dz, d\overline z$, and dually, vectors $\xi, v_z, v_{\overline 
z}$. On $N$ (defined in \eqref{N}), we use a basis ${\xi}\, , \frac{\partial}{\partial 
t}\, , v_z \, , v_{\overline z}$; they have constant norm in $t$ (with respect to 
$\widetilde \Omega$), with the first two being orthogonal and normal to the latter two
which are isotropic.

We have the complex structure $J(\frac{\partial}{\partial t}) \,= \, {\xi}$, and so  
$$ \xi + \sqrt{-1} \frac{\partial}{\partial t}\, , \quad {v_z} $$ span the $(1,0)$ part of the
complexified tangent space. Dually,
$$ \alpha-\sqrt{-1}  {dt}\,  , ~\ dz$$ span the $(1,0)$ forms, while
$$ \alpha+\sqrt{-1} {dt}\,  , ~\ d\overline z$$
span the $(0,1)$ forms.  We have a real basis of $(1,1)$ forms $$d\alpha\, , \alpha\wedge dt\, ,
\widetilde{v}_3\,=\, \frac{v_3}{r^2}\, ,  \widetilde{v}_4 \,=\, \frac{v_4}{r^2}\, .$$

Let $\widetilde L$ denote the exterior product operation of forms by
$\widetilde \Omega$, and let
$\widetilde \Lambda$ denote the adjoint of $\widetilde L$. If $\eta$ is a $(p,q)$ form,
then $[\widetilde L\, , \widetilde\Lambda]
(\eta)\,=\, (p+q-2)(\eta)$. Thus, for a 2-form $\eta$ on $M$, one should have the component
$(\widetilde L)^2\widetilde\Lambda(\eta)\,=\, \widetilde L\widetilde\Lambda \widetilde L(\eta)
\,= \,2\widetilde L(\eta)$, and so 
$$
\widetilde \Lambda(\eta)\widetilde  \Omega\wedge \widetilde \Omega\,=\,
2\eta\wedge \widetilde \Omega\, .
$$

Let us compute the Laplacian over $X$. The coframe $\alpha\, , dz\, , d\overline z$ satisfies
the condition that $\alpha$ is orthogonal to the other two, and $dz\, , d\overline z$ are
isotropic, with $g(\alpha, \alpha) \,=\,  \frac{1}{2}$ and $g(dz, d\overline z)\,=\,\mu^{-1}$.
The volume form is $\sqrt{-1} \mu\ \alpha\wedge dz\wedge d\overline z\,=\,
d\alpha\wedge\alpha$. From the relation $g(a,b) d\,\text{vol} \,=\, a\wedge *b$, one has 
$$*\alpha\,=\, \frac{1}{2} d\alpha\, ,~ \  *dz\,=\, \sqrt{-1}  \alpha\wedge dz\, 
,~\  *d\overline z \,=\, - \sqrt{-1}\alpha\wedge d\overline z\, .
$$
Hence, for a function $f$ on $X$
$$ *df \,=\, *\left(\xi(f)\ \alpha + v_z(f)\ dz + v_{\overline{z}}(f) d\overline z\right)
\,=\, \xi(f)\  \frac{{d\alpha} }{2} + v_z(f)\  \sqrt{-1} \alpha\wedge dz -
v_{\overline{z}}(f) \sqrt{-1}\alpha\wedge d\overline z\, ,$$
$$ d*df \,=\,  \xi^2(f)\   \frac{ \alpha\wedge d\alpha}{2}  + v_{\overline{z}}(v_z(f))\ 
\sqrt{-1} \alpha\wedge dz\wedge d\overline z + v_z(v_{\overline{z}}(f) )\sqrt{-1}
\alpha\wedge dz\wedge d{\overline z}\, ,$$
$$ \Delta(f) \,=\, *d*df \,=\, 
\frac{1}{2} \xi^2(f)\ + \   \mu ^{-1}( v_{\overline{z}}(v_z(f))+
v_z(v_{\overline{z}}(f) ))\, .$$
Similarly, given a vector bundle $E$ on $X$ equipped with a connection $\nabla$,
one can extend $\nabla$ to an operator $$\nabla\,:\,\Gamma(E\otimes \wedge^k(X))
\,\longrightarrow \,\Gamma(E\otimes \wedge^{k+1}(X))\, ,$$ and one has on sections of $E$:
$$\Delta (s) \,=\, *\nabla*\nabla(f)\,= \,\frac{1}{2} (\nabla_\xi)^2(s) \ + \  
\mu^{-1}( \nabla_{v_{\overline{z}}}(\nabla_{v_z}(s))+ \nabla_{v_z}
(\nabla_{v_{\overline{z}}}(s) ))\, ,$$
with, on $L^2$ norms, $\langle s\, ,\Delta(s)\rangle\,=\, - \langle \nabla(s)\,
, \nabla(s)\rangle$.

\section{Bundles, Hermite-Einstein monopoles and holomorphic structures}

\subsection{The Hermite-Einstein condition}

Now assume that we have a Hermitian vector bundle $E$ over $N$, equipped
with a Hermitian Chern connection, which we write as:
$$\nabla_{v_z}  dz+ \nabla_{v_{\overline z}}  d\overline{z} +
\nabla_ {\xi} \alpha + \nabla_{\frac{\partial}{\partial t}} dt
$$
$$
=\, ( {v_z}  +A_{ v_z} ) dz + ( v_{\overline{z}}  +
A_{ v_{\overline{z}}} ) d{\overline{z} }+ ( \xi +A_{ \xi} )  \alpha +
(\frac{\partial}{\partial t}+ \phi)dt\, .$$
Recall that the Lie brackets of our vector fields on $X$ are of
the form $$[v_z,v_{\overline{z}}]
\,=\, -\sqrt{-1}\mu(z,\overline{z}) \xi\, ,\  [\xi,v_z]
\,= \,[\xi,v_{\overline{z}}]\,=\, 0\, ,$$ while on $N$, we have
$[\frac{\partial}{\partial t}, \xi ]= 0$. The curvature tensor is:
\begin{align*}F\,=\, &\ (( \sqrt{-1}\mu(z,\overline{z}))^{-1}[\nabla_{ v_z } ,
\nabla_{ v_{\overline{z}}} ] +   \nabla_{ \xi})\ d\alpha +( [\nabla_{ \xi} 
,\nabla_{\frac{\partial}{\partial t}}] ) \   \alpha \wedge dt\\ &
+[ \nabla_{ \xi} ,\nabla_ {v_z} ]\  \alpha\wedge dz + [\nabla_ {\xi} ,
\nabla_ {v_{\overline{z}}}]\  \alpha\wedge d\overline{z} \\& +
([\nabla_{\frac{\partial}{\partial t}},\nabla_ {v_z} ] )\ dt\wedge dz +
([\nabla_{\frac{\partial}{\partial t}},\nabla_{ v_{\overline{z}}}] )\ 
dt\wedge d\overline{z} \\ =&\ F^{1,1} + F^{2,0} + F^{0,2}\\ 
\buildrel{{\mathrm def}}\over{=}\,& \   (F_\Sigma \  d\alpha + F_\alpha  \  
\alpha \wedge dt + F_3 \widetilde{v}_3 + F_4 \widetilde{v}_4) + F^{2,0} + F^{0,2}\, .
\end{align*}
When the connection is invariant in the $t$-direction, it can be put in a gauge for which 
the connection matrices are invariant in the $t$-direction. If $\phi$ is the $ 
dt$-component of the connection, the commutators $[\nabla_v, 
\nabla_{\frac{\partial}{\partial t}}]$ become $\nabla_v(\phi)$, in particular, 
$F_\alpha$ becomes $\nabla_\xi\phi$.
 
We fix once and for all a collection of points $P\,=\,\{p_1,\cdots ,p_\ell\}
\,\subset\, X$. Define $q_i\,:=\, \pi(p_i)\, \in\, \Sigma$. We also fix  sequences $\vec{k}_i
\,=\, (k_{i,1},\cdots ,k_{i,n})$ of integers associated with $\{p_i\}_{i=1}^\ell$, and order
the indices so that $k_{i,1}\,\geq\, \ldots \,\geq\, k_{i,n}$. 

\begin{definition}\label{HEmonopole} A {\it ${\rm U}(n)$-Hermite-Einstein monopole
with constant $C$, Dirac singularities of type $\vec{k}_i$ at $p_i$} will be a rank
$n$ Hermitian vector bundle $E$ on $X$, equipped with a Hermitian connection $\nabla$
and a skew Hermitian endomorphism (``Higgs field'') $\phi$, defined away from the points
$p_i$, such that
\begin{itemize}
\item When lifted to $N$ (with $\phi$ becoming the connection component along the extra
circle direction), the result is not only compatible with the metric but also the
complex structure, so that $F^{0,2}\,=\,F^{2,0}\,= 0$. 
The $F^{0,2}\,=\,0$ condition is
$$[\nabla_{ v_{\overline{z}}} , \nabla_{\xi} - \sqrt{-1}\phi]\,=\,0\, .$$
Taking complex conjugates, the $F^{2,0}\,=\,0$ condition is
$$[\nabla_ {v_z} , \nabla_{\xi} + \sqrt{-1}\phi]\,=\,0\, .$$

\item The lifted connection satisfy the Hermite-Einstein condition 
$$\widetilde \Lambda(F)\,= \,-\sqrt{-1}\ C\cdot\bbi\, .$$
More explicitly: 
\begin{equation}\label{eC}
F_\Sigma -  \frac{F_\alpha }{2} \,=\, \frac{2((\sqrt{-1}
\mu(z,\overline{z}))^{-1}[\nabla_{v_z} , \nabla_{v_{\overline{z}}}] +  \nabla_{ \xi})
- ( \nabla_\xi \phi) }{2} 
\,=\,   -\sqrt{-1}(C\cdot\bbi)\, .
\end{equation}

\item The singularities at $p_i$ are of {\it Dirac type}, as defined below.
\end{itemize}
\end{definition}
 
\begin{definition}\label{def:Dirac-type}
Let $Y$ be a three-manifold, equipped with a metric, and let
$p$ be a point of $Y$. Let $R$ denote the locally defined function
on $Y$ given by the geodesic distance to $p$. Let
$(t\, ,x\, ,y)$ be coordinates centered at $p$ with respect to which the metric is
of the form $(\id + O(R))$ as $R\,\to\,0$.  Let $\psi\, , \theta $ be, as
above, angular coordinates on the sphere $R\,=\,c$, so that $R\, ,\psi\, ,\theta$
provide standard spherical coordinates on a neighborhood $B^3$ of $p$ defined
by the inequality $R\,<\,c$. We say that a solution to the Hermite-Einstein monopole
equations $(E\, ,\nabla\, , \phi)$ on
$Y\setminus\{p\}$ has a \emph{singularity of Dirac type, with weight
$\vec{k}\,=\,(k_1\, ,\cdots\, ,k_n)$ at $p$} if 
\begin{itemize}
\item there is a unitary isomorphism $I$ of the restriction of the bundle $E$ to
$B^3\setminus\{p\}$ with a direct sum of line
bundles $L_{k_1}\oplus\cdots\oplus L_{k_n}$, where $L_k$ is the pullback from $S^2$ of
the standard line bundle of degree $k$, and 

\item under the isomorphism $I$, in the trivializations of $E$ over the  two open subsets
$\theta\,\neq\, 0$ and $\theta\,\neq\,\pi$ of $B^3$ induced by standard trivializations of
the line bundles $L_{k_i}$, so that the $E$-trivializations have transition function
$\diag(e^{\sqrt{-1}k_1\psi}\, ,\cdots\, ,e^{\sqrt{-1}k_n\psi})$, one has,
in both trivializations,
$$
\phi \,=\, \frac {\sqrt{-1}}{2R}\diag(k_1,\cdots,k_n) + O(1)\, ,\ ~ \nabla (R \phi)\,=\,
O(1)\, .
$$
\end{itemize}
\end{definition}

\subsection{Holomorphic structures}

The aim is to highlight a 
Kobayashi-Hitchin correspondence for our monopoles: they should yield some holomorphic 
objects which classify them. Obviously, as $X$ is three-dimensional, this holomorphic 
data must either be linked to the complex curve $\Sigma$, or to the complex surface $N$. 
In the end, we will do both; we begin by saying what our holomorphic objects on $N$ 
become once one restricts them to $X$.

Let $E$ be a complex $\mathcal C^\infty$ vector bundle on $X$. Let $T\, \subset\,
TX$ be the orthogonal complement to the vector field $\xi$. We note
that $T$ being isomorphic under 
projection $d\pi$ to $T\Sigma$, has a complexification which splits as $\widetilde 
T^{1,0}\Sigma\oplus \widetilde T^{0,1}\Sigma$. So $\widetilde
T^{1,0}\Sigma$ and $\widetilde T^{0,1}\Sigma$ are identified with $\pi^*T^{1,0}\Sigma$
and $\pi^*T^{0,1}\Sigma$ respectively.

\begin{definition}
A holomorphic structure on $E$ over $X$ (or, locally, on an open set of $X$) will be given
by specifying first order operators
$$\nabla^{0,1}_\Sigma\,:\, \Gamma(E)\,\longrightarrow\, \Gamma(E\otimes (
{\widetilde T}\Sigma^{0,1})^* )\, ,\ ~ \nabla^c_\xi\,:\, \Gamma(E)\,\longrightarrow \,
\Gamma(E))\, ,$$
which are locally of the form
$$s\,\longmapsto\, (v_{\overline z}(s) + A_\Sigma^{0,1}(s))d\overline z\, , \ ~
s \,\longmapsto \,\xi(s)+ \phi^c(s)\, ,$$
and which commute. 
\end{definition}

A holomorphic structure on $E$ over $X$ is a reduction of a holomorphic structure 
for a bundle on $N\,=\, S^1\times X$, corresponding to an integrable $\overline 
\partial$ operator over $N$, invariant in the circle direction on $N\,= \,S^1\times 
X$.

We note that given any open subset $U\,\subset\,\Sigma$, and any section $\psi\,:\,U\, 
\longrightarrow\, X$ of $\pi$, the two operators $\nabla^{0,1}_\Sigma$ and $\nabla^c_\xi$ can 
be combined to give a $\overline \partial$ operator for $E$ on $\psi(U)$. We can 
think of the result as a holomorphic bundle $E_\psi$ over $U$. Given two such 
sections $ \psi\, , \tau$ on $U$, we can choose paths along the circle orbits from $\psi(U)$ 
to $\tau(U)$, and integrate $\nabla^c_\xi$ along these paths, from $\psi(U)$ to 
$\tau(U)$ to obtain a map
\begin{equation}\label{om}
\rho_{\psi\tau}\,:\, E_{\psi(U)}\,\longrightarrow\, E_{\tau(U)}\, .
\end{equation}
If these paths are chosen in a continuous fashion, this $\rho_{\psi\tau}$ will
be a holomorphic isomorphism; again it can be thought of as a vector bundle isomorphism 
$\rho_{\phi,\tau}\,:\,E_\psi\,\longrightarrow\, E_\tau$ over $U$. We note that there are choices 
involved in the definition of $\rho_{\phi,\tau}$, as to the direction along the 
circle orbits and more generally the winding number. If $\psi(U)$ and $\tau(U)$ do 
not intersect, we choose to go from $\psi(U)$ to $\tau(U)$, in the positive 
direction of the circle action, less than one full circle. If $\tau\,= \,\psi$, one can 
also choose one full positive turn around the circle, giving a monodromy $G_\psi\,:\, 
E_\psi\,\longrightarrow\, E_\psi $ over $U$.

\begin{definition}
A {\it meromorphic structure on $E$ with poles at $P\,= \,\{p_1\, ,\cdots\, ,p_\ell\}$} 
is first a holomorphic structure on $E$ over the complement $X\setminus P$. One asks in 
addition that the structure be meromorphic at each $p_i$ in the following sense. Let $U$ 
be an open subset of $\Sigma$ containing $q_i\,=\,\pi(p_i)$. For any pair of sections of 
$\pi$
$$\psi\, , \tau\,
\,:\, U\,\longrightarrow\, X$$ with disjoint images, one constructs $\rho_{\psi\tau}$
in \eqref{om}.
The result is a holomorphic isomorphism away from $q_i$, and also at $q_i$ if the 
paths of integration of $\nabla^c_\xi$ on the fibers above $q_i$ do not contain a 
$p_j$. We ask that the map, in more generality, be meromorphic at $q_i$ even if the 
paths of integration contain a $p_j$.

Let us choose sections $\psi, \tau$ of the projection $\pi\, :\,X\,\longrightarrow\, 
\Sigma$ with disjoint image on an open set $U$ containing $q_i$ such that the path along 
the circle orbit over $q_i$ from $\psi(q_i)$ to $\tau(q_i)$ passes through $p_i$ once in 
the positive direction, and not any other $p_j$. Let us also choose a coordinate $z$ on 
$\Sigma$ with $z\,=\,0$ corresponding to $q_i$. We say that the meromorphic structure 
has a {\it pole of type $\vec{k}_i \,=\,(k_{i,1}\, ,\cdots\, ,k_{i,n})$ at $p_i$} if the 
map $\rho_{\phi,\tau}\,:\, E_\psi\,\longrightarrow\, E_\phi$ is of the form
$$\rho_{\phi,\tau}\,=\, F(z) {\rm diag}(z^{k_{i,1}}, z^{k_{i,2}},\cdots ,
z^{k_{i,n}}) G(z)\, ,$$ 
with $F, G$ holomorphic and invertible. We note that the order $k_{i,1}$ of
the ``pole'' can be either positive or negative.
\end{definition}
 
For a monopole, the operators for the meromorphic structure are simply 
$\nabla_{v_{\overline z}}, \nabla^c_\xi =  \nabla_\xi -\sqrt{-1}\phi$. We have, as in \cite{ChHu}:

\begin{prop}
A ${\rm U}(n)$--Hermite--Einstein monopole with constant $C$, Dirac singularities of type
$\vec{k}_i$ at $p_i$ determines a meromorphic structure on $E$ with poles at $P$,
of type $\vec{k}_i$ at $p_i$.
\end{prop}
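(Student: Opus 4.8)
The plan is to verify the two conditions in the definition of a meromorphic structure: integrability of the reduced holomorphic structure on $X\setminus P$, and the prescribed pole behaviour at each $p_i$. For the first, I would take the operators $\nabla_{v_{\overline z}}$ and $\nabla^c_\xi = \nabla_\xi - \sqrt{-1}\phi$ supplied by the monopole and observe that, on $N = X\times S^1$, the $F^{0,2}=0$ condition in Definition \ref{HEmonopole} is precisely the statement $[\nabla_{v_{\overline z}}, \nabla_\xi - \sqrt{-1}\phi]=0$. Since these two operators are exactly the ones defining a holomorphic structure on $E$ over $X$ (in the sense of the definition preceding the proposition), their commuting is integrability. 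One should also note that they are of the required local form $s\mapsto (v_{\overline z}(s)+A^{0,1}_\Sigma(s))d\overline z$ and $s\mapsto \xi(s)+\phi^c(s)$, with $\phi^c = A_\xi - \sqrt{-1}\phi$, which is immediate from the expression for the Chern connection. So away from $P$ the monopole gives a bona fide holomorphic structure, and hence, via parallel transport of $\nabla^c_\xi$ along circle orbits, the holomorphic isomorphisms $\rho_{\psi\tau}$ of \eqref{om} between the induced bundles $E_\psi$, $E_\tau$ on $U\subset\Sigma$.

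Next I would turn to the behaviour near a singular point $p_i$. Fix sections $\psi,\tau$ of $\pi$ over $U\ni q_i$ with disjoint images whose connecting circle arc over $q_i$ passes through $p_i$ once positively and misses the other $p_j$. Away from $q_i$ the arc of integration misses all of $P$, so $\rho_{\psi\tau}$ is a holomorphic isomorphism of bundles there; the issue is controlling it as $z\to 0$. Here I would invoke the Dirac-type singularity model of Definition \ref{def:Dirac-type}: near $p_i$ there is a unitary trivialization in which $\phi = \frac{\sqrt{-1}}{2R}\diag(k_{i,1},\cdots,k_{i,n}) + O(1)$ and $\nabla(R\phi)=O(1)$, and the bundle is modelled on $L_{k_{i,1}}\oplus\cdots\oplus L_{k_{i,n}}$. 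The strategy is to compute the parallel transport of $\nabla^c_\xi = \nabla_\xi - \sqrt{-1}\phi$ along the Reeb orbit through a neighbourhood of $p_i$ in this model: the $-\sqrt{-1}\phi$ term contributes, to leading order, a factor behaving like $\exp\!\big(\int \frac{1}{2R}\diag(k_{i,j})\,ds\big)$ along the arc, and as the arc is pushed to pass through $p_i$ the relevant integral diverges logarithmically, producing exactly a $\diag(z^{k_{i,1}},\cdots,z^{k_{i,n}})$ factor after identifying the radial parameter with $|z|$ (and matching the angular coordinate $\psi$ with the argument of $z$, using that the transition function of the $L_{k}$'s is $\diag(e^{\sqrt{-1}k_{i,j}\psi})$). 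The remaining $\nabla_\xi$ part and the $O(1)$ corrections, together with the change of trivialization between $\psi$, $\tau$ and the model trivializations, contribute holomorphic invertible factors $F(z)$ on the left and $G(z)$ on the right; the estimate $\nabla(R\phi)=O(1)$ is what guarantees these correction factors extend holomorphically and invertibly across $z=0$ rather than picking up further singular behaviour. This yields the claimed normal form $\rho_{\phi,\tau} = F(z)\diag(z^{k_{i,1}},\cdots,z^{k_{i,n}})G(z)$.

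The main obstacle is this last step — extracting the precise $\diag(z^{k_{i,j}})$ factor from the parallel transport of $\nabla^c_\xi$ through the Dirac singularity, and showing the residual factors are genuinely holomorphic and invertible at $z=0$. This is essentially a careful ODE/asymptotic analysis in the model geometry: one must track how the transport map degenerates as the integration arc is slid onto $p_i$, control the non-abelian corrections using $\nabla(R\phi)=O(1)$, and reconcile the spherical coordinates $(R,\psi,\theta)$ of Definition \ref{def:Dirac-type} with a holomorphic coordinate $z$ on $\Sigma$ near $q_i$ (so that $R\sim|z|$ along the relevant slice and the $\psi$-dependence of the $L_{k}$ transition functions becomes the argument of $z$). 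Since the argument is local near $p_i$ and the monopole equations enter only through the structure of the singularity, the computation is parallel to the flat case treated in \cite{ChHu}, and I would organize the proof so as to reduce to that computation, with the curvature term $F_\Sigma$ and the function $\mu$ entering only through lower-order, holomorphically harmless corrections. Finally I would remark, as the statement already signals, that $k_{i,1}$ may have either sign, corresponding to the ordering $k_{i,1}\ge\cdots\ge k_{i,n}$ fixed earlier and to the positive-direction convention chosen for the transport arc.
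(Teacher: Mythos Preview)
Your proposal is correct and matches the paper's approach: integrability away from $P$ comes from $F^{0,2}=0$, and the pole structure at each $p_i$ is extracted from the Dirac-type model and reduced to the local computation in \cite{ChHu}. The paper is in fact terser than you are --- it simply invokes \cite[Proposition 2.5]{ChHu} for the singularity analysis --- so your sketch of the ODE/asymptotic argument producing the $\diag(z^{k_{i,j}})$ factor is a faithful expansion of what that citation contains.
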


The integrability of the holomorphic structure away from the singularity follows from
the equation $F^{0,2}\,=\,0$ satisfied by an Hermite-Einstein monopole. The meromorphic
behavior near the
singularities follows from the fact that the singularities are of Dirac type, and is
proven in \cite[Proposition 2.5]{ChHu}. A local version of this structure, on the
three-sphere, was considered by Pauly in \cite{Pauly-spherical-monopoles}.

\subsection{A degree}

Surfaces equipped with a Gauduchon metric give a well defined numerical degree for a
holomorphic bundle, by integrating against the trace of the curvature of a Chern
connection. The Gauduchon condition ensures that the integral is independent
of the Hermitian structure on the bundle. It should be mentioned that unlike   
the K\"ahler case, the degree can move continuously in a family of vector bundles.
In particular, the degree is no longer a topological invariant.

In our case, as the data is invariant
along the time $t$ direction, we obtain an expression for
the degree of an $t$-invariant bundle $E$ with connection $\nabla$ and
curvature $F$ by integrating along $t\,=\,0$, i.e., on the manifold $X$:
\begin{align*}\deg(E) ~\, = ~\,&  \sqrt{-1}\ Vol(X)^{-1}\int_X i(
\frac{\partial}{\partial t})(tr (F)\wedge \widetilde\Omega)\\
 = ~\,& \frac{  \sqrt{-1}}{2}\ Vol(X)^{-1}\int_X \widetilde\Lambda (tr (F))
i(\frac{\partial}{\partial t})(\widetilde\Omega\wedge \widetilde\Omega)\\ =~\,
& \frac{  \sqrt{-1}}{2}\ Vol(X)^{-1} \int_X \widetilde\Lambda (tr (F))
(4 d\alpha\wedge \alpha)\, ,\end{align*}
where $i(\frac{\partial}{\partial t})$ denotes the constriction of forms using the
vector field $\frac{\partial}{\partial t}$. Pursuing further, if one has the decomposition
$$tr(F)^{1,1} \,=\, tr(F)_\Sigma ( d\alpha) + tr(F)_\alpha (\alpha\wedge dt) +
tr(F)_{3} \widetilde\sigma_3+ tr(F)_{4} \widetilde\sigma_4\, ,$$
then in view of the equality $F_\alpha \,=\, \nabla_\xi(\phi)$, the integral becomes 
\begin{equation}\label{g1}
\deg(E) \,=\, \sqrt{-1} \ Vol(X)^{-1}\int_X (2tr(F)_\Sigma -
tr(\nabla_\xi\phi )( d\alpha\wedge \alpha)\, .
\end{equation}
We remark that $tr(\nabla_\xi\phi )\,=\, \xi(tr(\phi))$; using the fact that
$\int_{S^1} \xi(tr(\phi)) \alpha \,=\,0$, the integral in \eqref{g1} is then
\begin{align} \deg(E) \,=\,   & 2\sqrt{-1} \ Vol(X)^{-1}\int_X ( tr(F)_\Sigma
d\alpha)\wedge \alpha)\\ =\, & 
    2\sqrt{-1} \ Vol(X)^{-1}\int_X tr(F) \wedge \alpha\, .\nonumber\end{align}
If the Hermite-Einstein equation is satisfied, the degree is $2nC$, where $n$ is the rank
and $C$ is the constant in \eqref{eC}.

\begin{definition}
A {\it meromorphic section} of a meromorphic structure on $E\,\longrightarrow\, X$ is a
$C^\infty$ section of $E$ over $X\setminus P$ lying in the kernel of the operators
$\nabla^{0,1}_\Sigma$ and $\nabla^c_\xi$.
\end{definition}

\begin{prop}\label{prop1}
Consider $C$ in \eqref{eC}. If $C\,<\,0$, an Hermite--Einstein monopole has no non-zero 
meromorphic sections. If $C\,=\,0$, the only possibility for a section $s$ is as a 
covariant constant section, lying in the kernel of $\phi$. In particular, $s$ then 
defines a rank one Hermite--Einstein monopole summand, so that the monopole splits as a 
direct sum of a rank one monopole and a rank $n-1$ monopole.
\end{prop}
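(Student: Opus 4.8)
The plan is a Bochner/Weitzenb\"ock argument for the function $|s|^2$, using the Laplacian formulas established in Section 2. First, a meromorphic section $s$, being in the kernel of $\nabla^{0,1}_\Sigma$ and $\nabla^c_\xi=\nabla_\xi-\sqrt{-1}\,\phi$, satisfies on $X\setminus P$
\[
\nabla_{v_{\overline z}}s\,=\,0\, ,\qquad \nabla_\xi s\,=\,\sqrt{-1}\,\phi\, s\, .
\]
The key step is to feed these, together with the curvature identity defining $F_\Sigma$ and the Hermite--Einstein equation \eqref{eC}, into the formula $\Delta s=\tfrac12\nabla_\xi^2 s+\mu^{-1}\big(\nabla_{v_{\overline z}}\nabla_{v_z}s+\nabla_{v_z}\nabla_{v_{\overline z}}s\big)$ for the Laplacian on sections, and then pair with $s$.

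Carrying out the substitution: the term $\nabla_{v_z}\nabla_{v_{\overline z}}s$ drops, $\nabla_{v_{\overline z}}\nabla_{v_z}s=-[\nabla_{v_z},\nabla_{v_{\overline z}}]s$, and from the definition of $F_\Sigma$ together with $\nabla_\xi s=\sqrt{-1}\phi s$ one has $[\nabla_{v_z},\nabla_{v_{\overline z}}]s=\sqrt{-1}\,\mu\,F_\Sigma s+\mu\,\phi\, s$; using $F_\Sigma=\tfrac12\nabla_\xi\phi-\sqrt{-1}\,C$ from \eqref{eC}, and $\nabla_\xi^2 s=\sqrt{-1}(\nabla_\xi\phi)s-\phi^2 s$, the $\nabla_\xi\phi$ contributions cancel and one is left with $\Delta s=-\big(\tfrac12\phi^2+\phi+C\big)s$. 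Since $\phi$ is skew-Hermitian, $\langle s,\phi^2 s\rangle=-|\phi s|^2$ is real and $\langle s,\phi s\rangle$ is purely imaginary, so $\mathrm{Re}\langle s,\Delta s\rangle=\tfrac12|\phi s|^2-C|s|^2$. Feeding this into the pointwise identity $\Delta|s|^2=2\,\mathrm{Re}\langle s,\Delta s\rangle+2|\nabla s|^2$, and using that (because $\nabla_{v_{\overline z}}s=0$) $|\nabla s|^2=\tfrac12|\nabla_\xi s|^2+\mu^{-1}|\nabla_{v_z}s|^2=\tfrac12|\phi s|^2+\mu^{-1}|\nabla_{v_z}s|^2$, one obtains on $X\setminus P$ the identity
\[
\Delta|s|^2\,=\,2\,|\phi s|^2\,+\,2\,\mu^{-1}|\nabla_{v_z}s|^2\,-\,2\,C\,|s|^2\, ,
\]
with $\mu>0$.

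Granting this, both assertions follow at once. If $C\le 0$ the right-hand side is $\ge 0$, so $|s|^2$ is subharmonic on $X\setminus P$; once one knows that the Dirac points contribute no boundary term (see below), integrating over the closed manifold $X$ gives $\int_X\Delta|s|^2=0$, whence each of the three non-negative terms vanishes identically: $\phi s\equiv 0$, $\nabla_{v_z}s\equiv 0$, and $C|s|^2\equiv 0$. If $C<0$ the last forces $s\equiv 0$, proving the first statement. If $C=0$, then $\phi s\equiv 0$, hence $\nabla_\xi s=\sqrt{-1}\phi s=0$; combined with $\nabla_{v_{\overline z}}s=0$ and $\nabla_{v_z}s=0$ this says $s$ is covariant constant and annihilated by $\phi$. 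Then $|s|^2$ is a nonzero constant, so $\mathbb{C}\cdot s$ is a line sub-bundle of $E$ invariant under $\nabla$ and $\phi$, on which the induced structure is the trivial rank one monopole; by unitarity of $\nabla$ and skew-Hermiticity of $\phi$ its orthogonal complement is invariant as well, and $E=(\mathbb{C}s)\oplus(\mathbb{C}s)^\perp$ splits the monopole into a rank one and a rank $n-1$ summand, as claimed.

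The one step requiring real work --- and the point I expect to be the main obstacle --- is the behaviour of $|s|^2$ near the Dirac points: one must check that the subharmonic function $|s|^2$ extends across each $p_i$ (equivalently, that a cutoff argument around $p_i$ produces no boundary contribution to $\int_X\Delta|s|^2$). This is governed by the Dirac normal form of Definition \ref{def:Dirac-type}: near $p_i$, $\nabla_{v_{\overline z}}s=0$ makes $s$ holomorphic transverse to the Reeb orbit, while along the orbit the equation $\nabla_\xi s=\sqrt{-1}\phi s$ is a regular-singular ODE with $\phi\sim\tfrac{\sqrt{-1}}{2R}\diag(\vec k_i)$, which pins down the admissible growth of $|s|^2$; the resulting removable-singularity argument is exactly the one carried out for the flat base in \cite{ChHu}. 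Away from $P$ the argument is just the K\"ahler-identities computation above.
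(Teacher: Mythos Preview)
Your argument is correct and is essentially the same Bochner/Weitzenb\"ock computation as the paper's, just organized differently: you derive the pointwise identity $\Delta|s|^2 = 2|\phi s|^2 + 2\mu^{-1}|\nabla_{v_z}s|^2 - 2C|s|^2$ and then integrate, whereas the paper works directly at the $L^2$ level, writing the operator identity
\[
\mu^{-1}\nabla_{v_z}\nabla_{v_{\overline z}} + \tfrac14(\nabla_\xi+\sqrt{-1}\phi)(\nabla_\xi-\sqrt{-1}\phi)
\,=\,\tfrac12\Delta+\tfrac14\phi^2+\tfrac{\sqrt{-1}}{4}(2F_\Sigma-F_\alpha)-\tfrac{\sqrt{-1}}{2}\nabla_\xi
\]
(which kills holomorphic sections on the left) and pairing with $s$ to obtain $\tfrac{C}{2}\|s\|_{L^2}^2\ge 0$. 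After integration the two computations coincide term by term; in particular your extra $-\phi s$ term in $\Delta s$ is exactly what the paper's stray $-\tfrac{\sqrt{-1}}{2}\nabla_\xi$ term produces, and both are disposed of by the same observation that $\langle s,\nabla_\xi s\rangle=\langle s,\sqrt{-1}\phi s\rangle$ has vanishing circle integral (your version, via taking real parts, is arguably cleaner here). Both proofs flag the Dirac singularities as the one place needing care and hand that step off to the estimates in \cite{ChHu}.
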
 
 
\begin{proof}
One has the identity
\begin{align*}
 \mu^{-1}(\nabla_{v_z}\nabla_{v_{\overline{z}}} )\ +& \frac{1}{4} (\nabla_{\xi}+
\sqrt{-1}\phi)(\nabla_{\xi}-\sqrt{-1}\phi)\\ =~\,&\
\frac { \mu^{-1}}{2}(\nabla_{v_z}\nabla_{v_{\overline{z}}}  +
\nabla_{v_{\overline{z}}}\nabla_{v_z}  ) + \frac{\sqrt{-1}}{2}F_\Sigma -\sqrt{-1}
\frac{\nabla_\xi}{2}  \\ &\quad + \frac{1}{4}(\nabla^2_{\xi} + \phi^2-
\sqrt{-1}F_\alpha)\\ =~\,& \frac{1}{2}\Delta + \frac{1}{4} \phi^2 + 
\frac{\sqrt{-1}}{4} (2F_\Sigma-F_\alpha) -\sqrt{-1}\frac{\nabla_\xi}{2}\, .
\end{align*}
In particular, applying $\frac{1}{2}\Delta + \frac{1}{4} \phi^2 +
\frac{\sqrt{-1}}{4} (2F_\Sigma-F_\alpha) -\sqrt{-1}\frac{\nabla_\xi}{2}$ to a holomorphic
section gives zero.
Now start with a holomorphic section $s$. We have 
\begin{align*}
\frac{ C }{2}|s|^2_{L^2}&~\,=~\,\int_X
\scp{s,\frac{\sqrt{-1}}{4} (2F_\Sigma-F_\alpha)s}d\,{\rm vol}\\
&\geq ~\,
  \int_{X}\bigl(\scp{s,\frac{\sqrt{-1}}{4} (2F_\Sigma-F_\alpha)s}
 -\frac{1}{2}|\nabla s|^2- \frac{1}{4}|\phi s|^2\bigr)d\, {\rm vol}\\
&=~\,\int_{X}
\scp{s,(\frac{1}{2}\Delta + \frac{1}{4} \phi^2 + \frac{\sqrt{-1}}{4}
(2F_\Sigma-F_\alpha))s}d\,{\rm vol}\\
&=~\,\int_{X}
\scp{s,(\frac{1}{2}\Delta + \frac{1}{4} \phi^2 + \frac{\sqrt{-1}}{4}
(2F_\Sigma-F_\alpha)-\sqrt{-1}\frac{\nabla_\xi}{2})s}d\,{\rm vol}\\
&=~\,0\, .
\end{align*}

The third step involves an integration by parts. One checks that this causes no 
difficulties at the singularities.  For the fourth, one has the fact that the 
integrals along the circles in $X$ of $\xi\langle s\, ,s\rangle \,=\,
2\langle s\, , \nabla_\xi(s)\rangle$ is zero. 
Thus, unless $C$ is positive or zero, one finds $s\,=\,0$. If $C$ is zero, then we have 
$\nabla(s) \,=\, \phi(s) \,=\, 0$. This then tells us that the orthogonal complement of 
$s$ is also an invariant summand under the connection, and that it also is 
invariant under $\phi$.
\end{proof}

Proposition \ref{prop1} tells us in effect that our notion of degree gives an 
appropriate definition of stability. Indeed, if we have a vector bundle $E$ on $N$ with 
a ``$\overline\partial$''- operator $(\nabla_{v_{\overline 
z}}\, ,\nabla_\xi-\sqrt{-1}\phi)$ which is integrable, one can extend it as the 
Chern connection, by specifying a Hermitian metric. We saw above that one then 
has a well defined degree, independent of the choice. One can define meromorphic 
sub-bundles as bundles invariant under $(\nabla_{v_{\overline 
z}}\, ,\nabla_\xi-\sqrt{-1}\phi)$. Define the degree of
meromorphic sub-bundles in the same way. Define 
the slope $\mu(F)$ of any nonzero sub-bundle $F$ in the usual way as the quotient
of the degree by the rank.

\begin{definition} We say that the bundle $E$ on $N$ is stable (respectively, semistable) if
for all holomorphic sub-bundles $0\,\not=\, F\, \subsetneq\,E$ invariant under translation by $t$, 
$$\mu(F)\,<\, \mu(E) ~\ \text{(respectively,~}\, \mu(F)\,\leq\, \mu(E){\rm )}\, .
$$
A semistable vector bundle is called polystable if it is a direct sum of stable
vector bundles.
\end{definition}

\begin{thm} A Hermite-Einstein monopole on $X$ defines a polystable meromorphic 
structure. If the Hermite-Einstein is irreducible then the meromorphic structure is 
stable.
\end{thm}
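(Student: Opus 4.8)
The plan is to run the Chern--Weil argument that underlies any Kobayashi--Hitchin correspondence, adapted to the Gauduchon surface $N=X\times S^1$ and to the $t$-invariant, singular situation, and using the degree computation preceding \eqref{g1} together with the local analysis near the Dirac singularities already invoked in the proof of Proposition~\ref{prop1}.

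First I would fix a Hermite--Einstein monopole $(E,\nabla,\phi)$ with constant $C$ and lift it to the $t$-invariant holomorphic bundle on $N$ whose $\overline{\partial}$-operator is generated by $\nabla_{v_{\overline z}}$ and $\nabla^c_\xi=\nabla_\xi-\sqrt{-1}\phi$; recall that, computed with the monopole metric $h$, its degree is $2nC$, so its slope is $\mu(E)=2C$. Let $0\neq F\subsetneq E$ be a $t$-invariant holomorphic (hence meromorphic) sub-bundle, a priori defined over $X\setminus P$. Using $h$ we obtain the orthogonal complement $F^\perp$ and the second fundamental form $\beta$, a $(1,0)$-form on $N$ with values in $\mathrm{Hom}(F,F^\perp)$; since $F$ is holomorphic, $\overline{\partial}_E$ preserves $F$, and the curvature of the induced Chern connection on $F$ is $R_F=\pi_F\circ R_E|_F-\beta^*\wedge\beta$. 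Contracting with $\widetilde\Lambda$, taking traces, integrating over $X$ (legitimate by $t$-invariance, exactly as in the derivation of \eqref{g1}), and using the Hermite--Einstein equation $\widetilde\Lambda R_E=-\sqrt{-1}\,C\cdot\bbi$, one is led to
\[
\deg(F)\;=\;\mu(E)\,\mathrm{rank}(F)\;-\;c\,\|\beta\|^2_{L^2}\,,\qquad c>0\,,
\]
the sign of the $\beta$-term being the standard one because $\widetilde\Omega$ is a positive $(1,1)$-form. Hence $\mu(F)\leq\mu(E)$, with equality precisely when $\beta\equiv0$; this gives semistability.

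The analytic content of the display above is the justification of the two integrations by parts in the presence of the Dirac singularities: that $\beta$ is square-integrable, that the Chern--Weil integral defining $\deg(F)$ converges and equals the degree of the meromorphic sub-bundle $F$, and that there are no boundary contributions over the circles $\{p_i\}\times S^1$. This is the same local analysis as in the proof of Proposition~\ref{prop1}, and the relevant estimates are carried out in \cite{ChHu}: near $p_i$ the conditions $\phi=\frac{\sqrt{-1}}{2R}\diag(\vec{k}_i)+O(1)$ and $\nabla(R\phi)=O(1)$ force $F$ to be compatible with the Dirac splitting $L_{k_{i,1}}\oplus\cdots\oplus L_{k_{i,n}}$, so that $F$ itself acquires a Dirac-type structure at $p_i$ with weights a sub-multiset of $\vec{k}_i$, and the integrals in question are then convergent with vanishing boundary terms. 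I expect this transplantation of the singular analysis of \cite{ChHu} to our three-fold $X$ to be the main obstacle; the remainder is formal.

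Finally I would deduce polystability from the equality case. If $\mu(F)=\mu(E)$ then $\beta\equiv0$, so over $X\setminus P$ both $F$ and $F^\perp$ are preserved by $\nabla$ and by $\phi$; by the compatibility with the Dirac data just noted, this decomposition extends across the $p_i$, and $R_F=\pi_F\circ R_E|_F$, $R_{F^\perp}=\pi_{F^\perp}\circ R_E|_{F^\perp}$ show that $F$ and $F^\perp$ are again Hermite--Einstein monopoles with the same constant $C$ and with Dirac weights the complementary sub-multisets of the $\vec{k}_i$. Thus $(E,\nabla,\phi)=(F,\nabla|_F,\phi|_F)\oplus(F^\perp,\nabla|_{F^\perp},\phi|_{F^\perp})$ as monopoles. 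In particular, if the monopole is irreducible there is no such proper $F$, so the meromorphic structure is stable. In general, iterating the splitting expresses $(E,\nabla,\phi)$ as a finite direct sum of irreducible Hermite--Einstein monopoles, each stable by the previous sentence, whence the meromorphic structure of $E$ is a direct sum of stable ones, i.e.\ polystable.
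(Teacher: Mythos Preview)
Your argument is correct and is the standard direct Chern--Weil/second-fundamental-form proof of the Kobayashi--Hitchin implication, transplanted to the Gauduchon setting with Dirac singularities; the singular analysis you defer to \cite{ChHu} is indeed the same as what Proposition~\ref{prop1} invokes. The paper, however, takes a slightly different route: rather than computing $\deg(F)$ via the Gauss equation $R_F=\pi_F R_E|_F-\beta^*\wedge\beta$, it passes to exterior powers. A rank-$r$ sub-bundle $F\subset E$ gives a meromorphic section $s_F$ of $(\bigwedge^r F)^*\otimes\bigwedge^r E$, and the induced Hermite--Einstein monopole structure on this bundle has constant proportional to $\mu(E)-\mu(F)$; one then simply applies Proposition~\ref{prop1} to $s_F$. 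Equality of slopes forces $s_F$ covariant constant, which in turn forces $F$ to be a sub-monopole.

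The two arguments are of course the same computation in disguise (the $\|\nabla s_F\|^2+\|\phi s_F\|^2$ terms in Proposition~\ref{prop1} unpack to your $\|\beta\|^2$), but the packaging differs: the paper's version is shorter because the integration by parts and the handling of the singularities have already been dealt with once in Proposition~\ref{prop1}, so the theorem becomes a formal corollary. Your version is more self-contained and closer to the usual textbook presentation. One small caveat: your assertion that the Dirac asymptotics ``force $F$ to be compatible with the Dirac splitting'' is stronger than what you actually need (convergence of the integrals and vanishing of boundary terms), and is not literally true for an arbitrary holomorphic sub-bundle; it would be safer to phrase this as $F$ acquiring a filtration, or simply to cite the relevant estimates from \cite{ChHu} without the splitting claim.
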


\begin{proof}
To see this, we make a few remarks. 
\begin{itemize} 
\item A meromorphic subbundle $F$ of $E$ of rank $k$ defines a meromorphic section 
$s_F$ of $Hom(\bigwedge^k F\, , \bigwedge^kE)$, invariant under translation by $t$, and so 
a section of $L\otimes\bigwedge^k E$, with $L$ being the line bundle $\bigwedge^k F^*$.

\item If $E\, ,F$ are of degrees $k\, ,k'$ respectively, and of ranks $n\, ,n'$ respectively, 
the degree of $L\otimes\bigwedge^k E$ is $-k'n+kn'$. It is positive or negative 
depending on whether the difference $\mu(E)-\mu(F)$ of slopes is positive or negative.

\item A Hermite-Einstein monopole structure on $E$ induces a natural Hermite-Einstein 
monopole structure on $L\otimes\bigwedge^k E$

\item A covariant constant section $s_F$ of $L\otimes\bigwedge^k E$, coming from a 
subbundle $F$ of $E$, induces a sub-monopole of $E$.
\end{itemize}
The theorem then follows from the preceding proposition.
\end{proof}

\section{From meromorphic structures to monopoles}

Thus, a Hermite-Einstein monopole on $X$ defines a semistable meromorphic structure. 
By the general Kobayashi-Hitchin correspondence, this should yield a bijective map. The 
main difficulty is in showing that the map is surjective: given a semistable 
meromorphic structure, one would like to find a hermitian structure on the bundle such 
that the result satisfies the Hermite-Einstein condition of Definition 
\ref{HEmonopole}. This amounts to solving a heat equation on the metric. We remark:

(1)\, The case when the bundle $\pi\,:\,X\,\longrightarrow\, \Sigma$ is trivial, 
meaning $\pi$ is the projection of $\Sigma\times S^1$ to $S$, is covered in \cite{ChHu}. 
In this case the corresponding manifold $N$ is K\"ahler, and one can appeal to the basic
theorem of Simpson (\cite{Simpson-Hodge-structures}), and show the existence of a solution
to the equation away from $P$, corresponding to our meromorphic structure. The singularities 
fall into the category covered by Simpson's theorem. One can then appeal to an idea 
developed in the work of Pauly (\cite{pauly}) to show that the result has the right 
Dirac type singularities at $P$.

(2)\, In the case which concerns us, one would have the required theorem if there were 
no singularities. Indeed, on a general closed Gauduchon surface, the Kobayashi-Hitchin 
correspondence has been established by Buchdahl (\cite{Buchdahl}). More generally, 
Jacob (\cite{Jacob}) has proven the more general theorem of Simpson, but again only in 
the case where there are no singularities.

It thus seems likely that the theorem extends to the case that concerns us here. The 
main technical issue seems to be that in the Gauduchon case one does not have the 
Donaldson functional that controls the heat flow near the singularities. We would like 
to thank Adam Jacob for explaining this to us.
 
In any case, one still has the injectivity, as in (\cite{ChHu}): If one has two 
Hermite-Einstein monopoles $E,E'$, such that the corresponding meromorphic structures
$\mathcal E\, , \mathcal E'$ are isomorphic (i.e., through a bundle map on $X\setminus P$
which intertwines the 
holomorphic structures, and preserves the singularity structure -- they therefore have 
the same degree), one has a holomorphic section $s$ of the bundle $\mathcal E^*\otimes 
\mathcal E'$. On the other hand, one has an Hermite-Einstein monopole structure on
$E^*\otimes E'$, with constant $0$, and so the section $s$ must be covariant constant
and commute with $\phi$, and so must define a monopole isomorphism.
 
\section{Holomorphic data on the curve $\Sigma$}\label{se5}

We continue with our assumption that the Sasakian manifold $X$ is regular. So the projection
$\pi$ in \eqref{e1} makes $X$ a principal $S^1$--bundle over the Riemann surface $\Sigma$.

\subsection{Reducing to the curve}

Suppose that we are given a meromorphic structure on a vector
bundle $E$ over the Sasakian three-fold $X$. Let us cover $\Sigma$ by open sets 
$U_\alpha$, and choose sections $$\psi_\alpha\,:\,U_\alpha\,\longrightarrow\,X\, ;$$ we
assume that the images of these sections do not intersect, and that the images do not
contain any $p_i$. We will also assume that enough $\psi_\alpha\,:\,U_\alpha\,
\longrightarrow\, X$ are chosen so that if $p_i\, ,p_j$ lie on the same orbit (so that
$q_i\,=\,q_j$), there is a $\psi_\alpha(q_i)$ lying on the positive path from $p_i$ to
$p_j$. Let
$$
Q \,:=\, \{q_1,\cdots ,q_\ell\}\ \ \text{ and }\ \ \Sigma_0\,:=\, \Sigma\setminus Q\, .
$$

We have holomorphic bundles $E_\alpha\,=\, E_{\psi_\alpha}$ over $U_\alpha$, obtained by 
restricting the holomorphic structure on $E$ to $\psi_\alpha(U_\alpha)$, and 
meromorphic maps (monodromies of $\nabla^c_\xi$ in the $\xi$ direction)
$$G_\alpha\,:\,E_\alpha\,\longrightarrow\, E_\alpha\, ,$$
which are isomorphisms on $\Sigma_0\bigcap U_\alpha$, and have singularities at
$Q\bigcap U_\alpha$. If $p_i$ is alone on its $S^1$ orbit, meaning
$p_i\,=\, \pi^{-1}(q_i)\bigcap P$, then the singularity type at $q_i$ is $\vec{k}_i$. We
also have maps $$\rho_{\beta\alpha}\,:\,E_\alpha\,\longrightarrow\, E_\beta\, ,$$ defined
over $U_\alpha\bigcap U_\beta$, which are obtained by integrating our partial connection
$\nabla^c_\xi$ in the positive direction, from  $U_\alpha$ to $U_\beta$; these are again
meromorphic with polar divisor supported over $Q$, and elsewhere are isomorphisms over
their domains of definition. There is a twisted cocycle condition:
$$\rho_{\alpha\beta}\rho_{\beta\alpha} \,=\, G_\alpha\, .$$
The twist is due to the fact that one is doing one complete turn around the circle
going from $\psi_\alpha( U_\alpha\cap U_\beta)$ to $\psi_\beta( U_\alpha\cap U_\beta)$
to $\psi_\alpha( U_\alpha\cap U_\beta)$, as one is always moving in the positive
direction. In the same way, one has on triple overlaps:
$$
\rho_{\alpha \beta}\rho_{\beta \gamma}\,= \, \rho_{\alpha \gamma} \,~~\ {\mathrm or}\,
~~\   \rho_{\alpha \gamma}G_\gamma
$$
on each component of $U_\alpha\bigcap U_\beta\bigcap U_\gamma$ depending on whether the
images under $\psi_\alpha\, ,\psi_\beta\, , \psi_\gamma$ of the component occur cyclically
as one goes along the orbits of the circle action in $X$, or not. We would like to
understand the set of solutions $E_\alpha\, , G_\alpha\, , \rho_{\alpha,\beta}$ to these
equations, modulo the obvious transformations given by gauge transformations on the $E_i$. We
refer to solutions of these equations as {\it twisted bundle triples} over $\Sigma$.

We first choose some explicit open subsets $U_\alpha$. One can trivialize the circle 
bundle $X\,\longrightarrow\, \Sigma$ over the complement of any point. This reduces us 
to a local geometry near the point which is essentially that of a power $k$ of the Hopf 
fibration. Let us choose a closed disk $D_1$ inside an open disk $D_2$ around a base 
point $p$. Thinking of these disks in the plane, centered at the origin, let $D_i$ be of 
radius $i$, centered on the origin. Set $U_0 \,=\, X\setminus D_1$, and put $U_{ s}\,=\, 
\epsilon$-neighborhood of the angular sector $$\theta\,\in\, (2\pi (s-1)/(k+1)\, , 2\pi 
s/(k+1))\, ,\ \ s\,=\, 1\, ,\cdots\, ,k+1$$ in $D_2$. The open sets $U_0\, , U_{1}\, , \cdots
\, , U_{k+1}$ cover $X$.

One can choose trivializations of  the fibration $X\,\longrightarrow\, \Sigma$ on $D_2\, ,
U_0$ such that the trivialization over $U_0$ is $\exp(\sqrt{-1}k\theta)$ times that on $D_2$.
If one trivializes the bundle over $U_{s} $ by $\exp(2\pi\sqrt{-1}(-s+1/2)/(k+1) )$ times
the trivialization on $D_2$, one obtains trivializations of the bundle satisfying our
requirements: the trivializations on the $U_{s}\, , s\,=\, 1\, ,\cdots\, ,k+1$ are arranged
anti-cyclically in the circle over overlaps, whereas the cyclic order on
$U_0\bigcap U_{s}\bigcap U_{s+1}$ is $s+1\, , 0\, ,  s$. Our cocycle conditions then become:
\begin{align}\label{twist}
\rho_{s, 1 }\rho_{1, s} \,= \,& \rho_{s, t}\rho_{t, s}\,= \,G_{s}\\
\rho_{0, s} \rho_{s, 0 } \, =\, & G_0\nonumber \\
\rho_{ s+2,s+1 } \rho_{ s+1 ,s  } \, =\, & \rho_{ s+2 ,s }\nonumber  \\
\rho_{ s ,0}\rho_{0 ,s+1 }\, =\, & \rho_{ s ,s+1 }\, .  \nonumber 
\end{align}
We then have:
\begin{prop}
The correspondence between meromorphic structures over $X\setminus P$ and twisted 
bundle triples is bijective.\end{prop}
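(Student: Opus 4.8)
The plan is to construct the correspondence in both directions and check that the two constructions are mutually inverse, up to the gauge equivalence built into the definition of twisted bundle triples. In one direction we have already, in the discussion preceding the proposition, explained how a meromorphic structure on $E$ over $X\setminus P$ produces a twisted bundle triple: one fixes the cover $\{U_\alpha\}$ of $\Sigma$ together with the sections $\psi_\alpha\colon U_\alpha\to X$ whose images are disjoint and miss $P$, restricts the holomorphic structure $(\nabla^{0,1}_\Sigma\, ,\nabla^c_\xi)$ to the slice $\psi_\alpha(U_\alpha)$ to get the holomorphic bundle $E_\alpha$, and then defines $\rho_{\beta\alpha}$ and $G_\alpha$ by integrating the partial connection $\nabla^c_\xi$ along the circle orbits in the positive direction. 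The twisted cocycle relations \eqref{twist} were verified there by the winding-number bookkeeping. So the first step is merely to record that this assignment is well defined once the combinatorial choices (the cover, the sections, the explicit trivializations over $U_0\, ,U_1\, ,\cdots\, ,U_{k+1}$) are fixed, and that changing the $\mathcal C^\infty$ frame of $E$ on $X$ changes the data by the gauge transformations already quotiented out.

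The substance is the inverse construction. Given a twisted bundle triple $(E_\alpha\, ,G_\alpha\, ,\rho_{\beta\alpha})$ over $\Sigma$, I would reconstruct a meromorphic structure on a bundle over $X\setminus P$ as follows. Over each tube $\pi^{-1}(U_\alpha)$, trivialize the $S^1$-bundle using the section $\psi_\alpha$, so that $\pi^{-1}(U_\alpha)\cong U_\alpha\times S^1$ with coordinate $\theta$ along the circle; declare the bundle on this tube to be the pullback of $E_\alpha$, equipped with the $\overline\partial$-operator $\nabla^{0,1}_\Sigma$ coming from the holomorphic structure of $E_\alpha$ in the $U_\alpha$-directions and with the partial connection in the $\xi$-direction chosen so that parallel transport once around the circle (in the positive sense, starting at $\theta=0$) is exactly $G_\alpha$ — for instance take $\nabla^c_\xi=\xi-\frac{1}{2\pi}\log G_\alpha$ using a branch of the logarithm, or more invariantly build the flat-in-$\theta$ bundle with monodromy $G_\alpha$ as a quotient of $\widetilde{U_\alpha\times\mathbb R}\times E_\alpha$. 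The gluing over $\pi^{-1}(U_\alpha\cap U_\beta)$ is done by $\rho_{\beta\alpha}$, inserted at the appropriate angular position; the three flavors of cocycle identity in \eqref{twist} are precisely what is needed for these local pieces to patch into a single bundle $E$ on $X\setminus P$ carrying the operators $(\nabla^{0,1}_\Sigma\, ,\nabla^c_\xi)$, which commute by construction since $G_\alpha$ and $\rho_{\beta\alpha}$ are holomorphic where defined. The polar behavior of $G_\alpha$ and $\rho_{\beta\alpha}$ over $Q$, of the prescribed type $\vec k_i$, translates directly into meromorphy of the reconstructed structure at the points $p_i$, matching the definition of a meromorphic structure with poles at $P$ of type $\vec k_i$.

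Finally I would check that the two constructions are inverse. Starting from a meromorphic structure, extracting the triple, and rebuilding, one recovers the original structure because the rebuilt bundle is glued from the very slices $E_{\psi_\alpha}$ and the transition data $\rho_{\beta\alpha}$ that were read off, and a holomorphic bundle on $X\setminus P$ with a $\nabla^c_\xi$-operator is determined by its restriction to a section together with the circle-monodromy — this is the uniqueness part, which follows from the fact that $\nabla^c_\xi$ integrates to an honest parallel transport once one fixes the holomorphic structure transverse to $\xi$. Conversely, starting from a triple and running the two constructions in the other order returns the same $(E_\alpha\, ,G_\alpha\, ,\rho_{\beta\alpha})$ on the nose, since restricting the rebuilt bundle to $\psi_\alpha(U_\alpha)$ gives back $E_\alpha$ and the monodromy/transition maps were installed to be $G_\alpha$ and $\rho_{\beta\alpha}$. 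The main obstacle I anticipate is purely bookkeeping: making the winding-number conventions in the gluing over triple overlaps consistent with the two cases ``cyclic'' versus ``anti-cyclic'' in \eqref{twist}, and checking that the choices of logarithm/branch used to realize the monodromies $G_\alpha$ do not affect the resulting holomorphic structure up to the allowed gauge equivalence — i.e.\ that different continuous choices of the circle paths give isomorphic twisted bundle triples, which is exactly the statement that $\rho_{\psi\tau}$ in \eqref{om} is well defined up to holomorphic isomorphism.
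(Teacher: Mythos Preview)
The paper does not actually supply a proof of this proposition; it is asserted immediately after the twisted cocycle identities \eqref{twist} are written down, the bijectivity being treated as evident once the construction ``meromorphic structure $\rightsquigarrow$ twisted bundle triple'' has been spelled out. Your sketch is therefore already more than what the paper offers, and the strategy you outline --- invert the construction by pulling back each $E_\alpha$ to the tube $\pi^{-1}(U_\alpha)$, installing the circle monodromy $G_\alpha$, and gluing the tubes via the $\rho_{\beta\alpha}$ --- is exactly the intended reading.

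Two places where you should tighten the argument. First, the suggestion $\nabla^c_\xi=\xi-\tfrac{1}{2\pi}\log G_\alpha$ is not really available: $G_\alpha$ is a matrix-valued meromorphic function with zeros and poles, so no global branch of the logarithm exists; the mapping-torus (quotient) description you give next is the correct construction and should be the primary one. Second, that quotient only produces a bundle over $\pi^{-1}(U_\alpha\setminus Q)$, since $G_\alpha$ is not invertible over $Q$; you still owe an explanation of why the reconstructed bundle and the operator $\nabla^c_\xi$ extend smoothly across the fiber $\pi^{-1}(q_i)$ except precisely at the prescribed point $p_i$, rather than at the cut $\psi_\alpha(q_i)$. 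This is where the hypothesis on the sections (that between consecutive $\psi_\alpha(q_i)$ and $\psi_\beta(q_i)$ along the orbit there is at most one $p_j$) earns its keep: on the arc containing $p_i$ one uses the pullback of $E_\alpha$ on the sub-arc from $\psi_\alpha(q_i)$ to $p_i$ and the pullback of $E_\beta$ on the sub-arc from $p_i$ to $\psi_\beta(q_i)$, glued by $\rho_{\beta\alpha}$ for $z\neq q_i$, so that the meromorphy of $\rho_{\beta\alpha}$ lands the singularity exactly at $p_i$. With this said, your verification that the two constructions are mutually inverse goes through.
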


\subsection{Rank one}

We now look at these equations in rank one. In this case the $G_s$ are functions
because they are endomorphisms. Also, since the cocycle equations tell us $G_s$ are 
conjugate, these functions patch together to give a single meromorphic function $G$.

We would like to find one solution to these equations, for a given $G$, using our
explicit cover. On our open set  $D_2$, we suppose that $G$ has neither zero nor
pole, and fix a $(k+1)$-th root  $G^{\frac{1}{k+1}}$ of $G$.  Let us choose a determination
$\log_s(z^{k+1} )$ of $\log (z^{k+1})$ on each $U_s\, ,~ s\,= \, 1\, ,\cdots \, ,k+1$ with
imaginary part going from $0$ to $2\pi$. On the overlap $U_s\bigcap U_{s+1}$, the two
determinations differ by $2\pi\sqrt{-1}$. Let us set
 \begin{align*} \rho_{s
+1, s } \,=\, & G^{\frac{1}{k+1}},\ s\,=\,1,\cdots ,k\\
 T_{0,s}\, =\, & G^{\frac{k\ log_s(z^{k+1} )}{2\pi(k+1)\sqrt{-1}}}\, .
 \end{align*}
One checks that this can be completed to a solution to  the equations (\ref{twist}).

Given one solution, we can find all the others by tensoring with a line bundle on 
$\Sigma$. Explicitly, if $T_{\beta\alpha}$ are transition functions for a line bundle 
over $\Sigma$, one can get from one solution of our twisted line bundle equations to 
another by $\rho_{\beta\alpha}\,\longmapsto\,\rho_{\beta\alpha}T_{\beta\alpha}$.

\begin{prop}
For a given meromorphic function $G$, the family of solutions in rank one 
to the twisted line bundle equations forms a torsor over the Picard group of the 
Riemann surface $\Sigma$.
\end{prop}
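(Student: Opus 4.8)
The plan is to analyze the set of rank-one solutions to the twisted line bundle equations for a fixed meromorphic function $G$ by exhibiting the group $\mathrm{Pic}(\Sigma)$ acting simply transitively on it. The action itself is essentially given to us by the last displayed formula of the preceding paragraph: if $T_{\beta\alpha}$ is a cocycle of transition functions for a line bundle $L$ on $\Sigma$ and $(\rho_{\beta\alpha}, G_\alpha)$ is a solution, then $(\rho_{\beta\alpha}T_{\beta\alpha}, G_\alpha)$ is again a solution, because the $T_{\beta\alpha}$ satisfy the ordinary (untwisted) cocycle condition $T_{\alpha\beta}T_{\beta\gamma}=T_{\alpha\gamma}$ and commute with everything in rank one, so each of the four equations in \eqref{twist} is preserved and the $G_\alpha$ are unchanged (note $T_{\alpha\alpha}=1$). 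First I would check that this is a well-defined group action: tensoring by the trivial line bundle with trivial trivialization gives back the same solution, tensoring by $L\otimes L'$ agrees with tensoring successively by $L$ and $L'$, and the construction descends to isomorphism classes of solutions and of line bundles. This uses only that in rank one all the relevant maps are scalar-valued and hence commute.

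Next I would establish transitivity. Given two solutions $(\rho_{\beta\alpha}, G_\alpha)$ and $(\rho'_{\beta\alpha}, G'_\alpha)$ for the \emph{same} $G$, the cocycle equations force $G_\alpha$ and $G'_\alpha$ to be the same meromorphic function $G$ on each $U_\alpha$ (the $G_s$ patch to $G$, as noted in the Rank one subsection), so in particular $G_\alpha=G'_\alpha$. Set $T_{\beta\alpha}:=\rho'_{\beta\alpha}\rho_{\beta\alpha}^{-1}$, a collection of nowhere-zero meromorphic functions on the overlaps $U_\alpha\cap U_\beta$; I claim it is an honest holomorphic cocycle for a line bundle on $\Sigma$, i.e.\ has neither zeros nor poles and satisfies the untwisted cocycle identity. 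Dividing the triple-overlap relations of \eqref{twist} for the primed and unprimed solutions, the twisting factors $G_\gamma$ cancel (they are identical), leaving $T_{\alpha\beta}T_{\beta\gamma}=T_{\alpha\gamma}$; dividing the first two relations of \eqref{twist} shows $T_{\alpha\beta}$ is invariant under the monodromy, and more importantly that the polar and zero divisors of $\rho'$ and $\rho$ coincide — both solutions have poles supported over $Q$ with the same type $\vec{k}_i$ (this is part of the definition of a twisted bundle triple and is preserved under the correspondence), so $T_{\beta\alpha}$ is holomorphic and invertible. Hence $T_{\beta\alpha}$ defines a class in $H^1(\Sigma,\mathcal{O}^*)=\mathrm{Pic}(\Sigma)$ carrying the first solution to the second.

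For freeness, suppose tensoring $(\rho_{\beta\alpha},G_\alpha)$ by a line bundle $L$ with cocycle $T_{\beta\alpha}$ yields an isomorphic solution. Unwinding the definition of isomorphism of twisted bundle triples (a collection of gauge transformations $g_\alpha$ on the $E_\alpha$ intertwining the $\rho$'s and the $G$'s), this says there are nowhere-zero holomorphic $g_\alpha$ on $U_\alpha$ with $\rho_{\beta\alpha}T_{\beta\alpha}=g_\beta\,\rho_{\beta\alpha}\,g_\alpha^{-1}$, i.e.\ $T_{\beta\alpha}=g_\beta g_\alpha^{-1}$, which is exactly the statement that $L$ is the trivial line bundle. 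So the stabilizer is trivial, and combined with transitivity the solution set is a $\mathrm{Pic}(\Sigma)$-torsor; it is non-empty because the explicit construction in the Rank one subsection produces a solution from any $G$ that is zero-and-pole-free on $D_2$, and the general case reduces to this by the choice of cover.

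The main obstacle I expect is the bookkeeping around the \emph{twist}: one must be careful that, when comparing two solutions, the twisting monodromies $G_\alpha$ genuinely cancel rather than contributing an extra factor, and that the resulting $T_{\beta\alpha}$ really is a \emph{holomorphic} (not merely meromorphic) cocycle on $\Sigma$ — this is where one uses that the two solutions share the same singularity type $\vec{k}_i$ at each $p_i$, so that their ratio is regular and invertible across $Q$. Everything else is the standard argument that the solutions of an affine (twisted) equation, if non-empty, form a torsor over the group of solutions of the associated homogeneous (untwisted) equation, which here is precisely $H^1(\Sigma,\mathcal{O}^*)$.
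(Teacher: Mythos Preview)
Your proposal is correct and follows the same approach as the paper: the paper simply constructs one explicit solution and observes that tensoring by an ordinary line bundle cocycle $T_{\beta\alpha}$ carries solutions to solutions, leaving the routine verification that this gives a free and transitive $\mathrm{Pic}(\Sigma)$-action implicit. You have just written out those details (taking the ratio of two solutions, checking the twist $G$ cancels so that the resulting cocycle is untwisted, and checking freeness via gauge equivalence), which is exactly what the paper's terse treatment invites the reader to do.
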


We note that the singularities $p_i$ of a monopole are constrained. Indeed, their types 
$k_i$ and their projections $q_i$ to $\Sigma$ determine the divisor $\sum_{i=1}^\ell
k_iq_i$ of the function 
$G$, which is constrained by Abel's theorem, imposing $g \,=\, {\rm genus}(\Sigma)$ complex 
constraints on the divisor. There are also constraints on the angular coordinates 
$\theta_i$ of the points $p_i$ along the orbits. Indeed, we will see, in the Abelian 
case, that these are linked to the Hermite-Einstein constant $C$, and so, fixing $C$
gives us one real constraint.

To see this, let us consider a fixed rank one triple ${\mathcal E}\,=\, (E_\alpha\, , G\, , 
\rho_{\alpha,\beta})$, with singularities at $p_i$ of type $k_i$. Now choose an angular 
coordinate $\theta$ on $X$ near $p_1$, with $$d\theta(\xi)\,=\,1\, , \ \ 
\theta(p_1)\,=\,0\, ,$$ and
define a family ${\mathcal E}_t$ of triples by keeping the same data as ${\mathcal E}$, but 
moving the singular point $p_1$ along its circle orbit in the positive direction to 
$\theta\,=\, t$; call the result $p_1(t)$. Let $z\,:\,U\,\longrightarrow\, D$ be a
coordinate on $U\,\subset\,\Sigma$ with $q_1\,\in\, U$ corresponding to $z\,=\,0$, and
$D$ the unit disk.
We want to consider the difference in the Hermite-Einstein constants (the constant
$C$ in \eqref{eC}) between
${\mathcal E}_t$ and ${\mathcal E}$. This amounts to computing the induced Hermite-Einstein
degree for the triple corresponding to
$Hom({\mathcal E}\, , {\mathcal E}_t)$. The triple corresponding to
$Hom({\mathcal E}\, , {\mathcal E}_t)$ has the property that away from the
orbit through $p_1$, it is canonically identified with the trivial triple $({\mathcal O}\, ,
\bbi\, , \bbi)$. Near $p_i$, one must take two sections
$$\psi_-\,:\,D\,\longrightarrow \,X\, , \ \psi_+\,:\,D\,\longrightarrow\, X\, ,$$ defined
in coordinates by  $\psi_-(z)\,= \,(z\, ,-t/2)$, $\psi_+(z) \,=\, (z\, , t/2)$. The
corresponding maps are $\rho_{-,+}\,= \,z^{-k}$, $\rho_{+-} \,=\, z^{k}$.  Let $S$ denote
the slit $\{z\,=\,0\, ,\,  \theta\,\in\, [0\, ,t]\}$ on $X$, with interior $S^0$ denoting the
slit $\{z\,=\,0$, $\theta\,\in\, (0\, ,t)\}$. Under the correspondence we have with meromorphic
structures on $X$, the triple for $Hom({\mathcal E}\, , {\mathcal E}_t)$ defines a meromorphic bundle
on $X-\{(z,\theta) \,=\, (0,0), (0,t)\}$, trivialized away from $S$, and the holomorphic
transition function to a neighborhood  $V\,\subset\, \pi^{-1}(D)$ of $S^0$ is  $z^{-k}$.
Following the usual recipe, let $\tau$ be a bump function on $X$, equal to one outside $V$
and equal to zero on a neighborhood of $S^0$ of radius $1/2$. One has a Hermitian metric
on our bundle given in the trivialization on $X\setminus S$ by 
$h\,=\, \tau + (1-\tau) z\overline z^k$,
inducing a Chern connection with curvature component $F_\Sigma(t)$ which has the
expression
$$F_\Sigma(t)\,=\, \frac{\sqrt{-1}}{2}\mu(z,\overline{z})\partial_z 
\partial_{\overline z} (\log ( \tau + (1-\tau) z\overline z^k)\, .$$
If one now computes the invariant 
$$ -2\sqrt{-1} Vol(X) C(t)\,=\,  \int_X F(t) \wedge \alpha\,=\,
\int_V F_\Sigma(t)d\alpha\wedge \alpha \,=\,\int_V F_\Sigma(t)d\alpha\wedge d\theta\, ,$$
and compares $F_{t'}$ with $F_t$, one gets 
\begin{align*}
-2\sqrt{-1} Vol(X) (C(t')-C(t))\,=\,& \int_t^{t'} \int_D \partial_z 
\partial_{\overline z} (\log ( \tau + (1-\tau) z\overline z^k)dz\wedge d\overline z\\
\,=\,& \int_t^{t'}\int_{z\overline z = 1/4} \partial_z  (\log ( z\overline z^k)dz
\,=\, k(t'-t)\, .
\end{align*}
Thus, $C(t)-C(0)\,= \,\frac{k\sqrt{-1} t}{ Vol(X)}$.

We have thus seen that there are $g \,=\, {\rm genus}(\Sigma)$ parameters worth of different meromorphic 
structures for a fixed choice of $p_i$, multiplicities $k_i$, and
Hermite-Einstein degree $C$, if 
there exists one. On the other hand, for these to exist, there are $g$ complex 
constraints on $q_i\,=\, \pi(p_i)$ along the curve $\Sigma$ and one real constraint for the 
location of $p_i$ along the circle orbits (for fixed $C$).

\subsection{Higher rank}

More generally, if one is dealing with vector bundles of higher rank $n$, one can take the 
determinant bundle and compute as above. Thus, if $tr(\vec{k}_i)\,=\, \sum_j 
\vec{k}_{i,j}$, we have the following:

\begin{prop} Let $$t\,=\,(t_1\, ,\cdots\, ,t_\ell)\, .$$ Let $(E_\alpha\, , G_\alpha\, ,
\rho_{\alpha,\beta})_t$ be obtained from
$(E_\alpha\, , G_\alpha\, , \rho_{\alpha,\beta})_0$ by shifting the corresponding
singularities $p_i$ along their circle orbits by $t_i$. Then the Hermite-Einstein degree
$C(t)$ of $(E_\alpha\, , G_\alpha\, , \rho_{\alpha,\beta})_t$ is obtained from the
Hermite-Einstein degree  $C(0)$ of $(E_\alpha\, , G_\alpha\, , \rho_{\alpha,\beta})_0$ by 
$$C(t)\, =\, C(0) + \frac{\sqrt{-1}}{ n Vol(X)} \sum_{i=1}^\ell tr(\vec{k}_i)t_i\, .$$
\end{prop}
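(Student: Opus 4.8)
The plan is to reduce the statement to the rank one computation of the preceding subsection by passing to determinants. First I would observe that taking determinants turns the data in question into rank one data of the right type: the trace of the Hermite--Einstein equation $\widetilde\Lambda(F)\,=\,-\sqrt{-1}\,C\cdot\bbi$ shows that the determinant line bundle $\bigwedge^n E$, whose Chern curvature is $\mathrm{tr}(F)$, is a rank one Hermite--Einstein monopole with constant $nC$; and if the twisted triple $(E_\alpha,G_\alpha,\rho_{\alpha\beta})$ has a pole of type $\vec k_i=(k_{i,1},\cdots,k_{i,n})$ at $p_i$, i.e.\ the relevant monodromy is locally $F(z)\,\diag(z^{k_{i,1}},\cdots,z^{k_{i,n}})\,G(z)$ with $F,G$ holomorphic and invertible, then its determinant is $\det F(z)\, z^{\,\mathrm{tr}(\vec k_i)}\,\det G(z)$, so the determinant triple $(\bigwedge^n E_\alpha,\det G_\alpha,\det\rho_{\alpha\beta})$ is a rank one solution of \eqref{twist} with a pole of type $\mathrm{tr}(\vec k_i)=\sum_j k_{i,j}$ at $p_i$. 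Hence it suffices to track the Hermite--Einstein constant $C_{\det}$ of the determinant and then use $C=C_{\det}/n$.

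Next I would note that sliding a singular point $p_i$ along its $S^1$--orbit is defined purely through the local transition data over a neighbourhood of $q_i$, and that this operation commutes with $\det$: the factor $z^{\,\mathrm{tr}(\vec k_i)}$ sits at the same angular position as the block $\diag(z^{k_{i,j}})$, so shifting the latter by $t_i$ shifts the former by $t_i$. Consequently the determinant of $(E_\alpha,G_\alpha,\rho_{\alpha\beta})_t$ is exactly the rank one triple obtained from $\det(E_\alpha,G_\alpha,\rho_{\alpha\beta})_0$ by sliding, for each $i$, its singularity of type $\mathrm{tr}(\vec k_i)$ at $p_i$ by the amount $t_i$.

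Now I would invoke the rank one computation: its argument is local around the orbit of a single moving point, producing the Hermitian metric $h=\tau+(1-\tau)z\overline z^{k}$ on a neighbourhood of the slit over that point and a Stokes-type integral which gives, for a point of type $k$, the derivative of the Hermite--Einstein constant in the shift parameter equal to $\frac{k\sqrt{-1}}{Vol(X)}$. I would therefore compute the effect of $t$ one coordinate at a time: moving only $p_i$ (with all other singularities, even those on the same orbit, held fixed, and $|t_i|$ small enough that $p_i$ is not carried past any other marked point) changes $C_{\det}$ at rate $\frac{\mathrm{tr}(\vec k_i)\sqrt{-1}}{Vol(X)}$, independently of where the remaining singularities sit. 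Since $C_{\det}(t)$ depends smoothly on $t$, integrating these partial derivatives yields
\[
C_{\det}(t)=C_{\det}(0)+\frac{\sqrt{-1}}{Vol(X)}\sum_{i=1}^\ell \mathrm{tr}(\vec k_i)\,t_i ,
\]
and dividing by $n$ gives the claimed formula.

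The step I expect to be the main obstacle is the bookkeeping when several $p_i$ lie on one circle orbit: sliding one of them can carry it past another marked point or past one of the reference sections $\psi_\alpha(q_i)$, which changes which branch of the cocycle relations \eqref{twist} is in force, so one must check that $(E_\alpha,G_\alpha,\rho_{\alpha\beta})_t$ is a well-defined family and that the change in $C$ still localizes at the single moving point with the elementary local model used in the rank one argument. I would handle this exactly as in Section \ref{se5}: for each fixed $t$ refine the cover so that the moving point is isolated in its angular sector, reducing to the one-singularity local picture already treated, and then appeal to the smooth dependence of $C(t)$ on $t$ to pass from the infinitesimal statement to the asserted affine dependence.
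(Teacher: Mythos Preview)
Your proposal is correct and follows exactly the approach the paper takes: the paper's argument is simply the one-line remark ``one can take the determinant bundle and compute as above,'' i.e.\ reduce to the rank one calculation of the preceding paragraph, use that the determinant has singularity type $\mathrm{tr}(\vec{k}_i)$ and Hermite--Einstein constant $nC$, and then divide by $n$. Your write-up elaborates this carefully (including the bookkeeping for several $p_i$ on one orbit, which the paper does not spell out), but the route is the same.
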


To understand our parameter space in higher rank, of course, things are not so simple: 
matrices do not always commute. One can Abelianize the problem, however, by what is now 
a classical construction: passing to a spectral curve \cite{Hitchin-construction, 
Markman}.

We have noted that the endomorphisms $G_\alpha$ are all conjugate to each other. This
means that there is an invariant spectral curve $S$, cut out in $\Sigma\times{\mathbb P}^1$
by the equations
$$\det (G_\alpha (z)-\eta\bbi) \,=\, 0\, .$$
Moreover, over each $U_\alpha $, we have quotient sheaves $\mathcal L_\alpha$ supported over
the spectral curve in $U_\alpha \times {\mathbb P}^1$. Let $\sigma\,:\,
\Sigma \times \bbp^1\,\longrightarrow \,\Sigma$ be the projection. Consider
the short exact sequence
$$0\,\longrightarrow \,\sigma^*E_\alpha \otimes {\mathcal O}(-1) \,\buildrel{G_\alpha (z)-
\eta\bbi}\over{\longrightarrow}\, \sigma^*E_\alpha\,\longrightarrow\,
 {\mathcal L_\alpha } \,\longrightarrow\, 0\, .$$
This encodes the pair $(E_\alpha\, , G_\alpha )$, where $E_\alpha\,= \,\sigma_*
{\mathcal L_\alpha }$ and $G_\alpha \,= \,\sigma_*(\times \eta)$, in other words,
$G_\alpha$ is multiplication by
the fiber coordinate. Now let us consider overlaps: on $U_\alpha \bigcap U_\beta$, we have
a diagram
$$ \begin{matrix}
0&\longrightarrow &\sigma^*E_\alpha \otimes {\mathcal O}(-1) &
\buildrel{G_\alpha (z)-\eta\bbi}\over{\longrightarrow} &\sigma^*E_\alpha &\longrightarrow
&{\mathcal L_\alpha } &\longrightarrow& 0\\
&&~\,~\, \Big\downarrow \rho_{\beta\alpha}&&~\,~\,~\,\Big\downarrow\rho_{\beta\alpha}&&
~\,~\,~\, ~\,\Big\downarrow\rho'_{\beta\alpha}\\
 0&\longrightarrow &\sigma^*E_\beta \otimes {\mathcal O}(-1)
&\buildrel{G_\beta (z)-\eta\bbi}\over{\longrightarrow} &\sigma^*E_\beta 
&\longrightarrow &{\mathcal L_\beta }
&\longrightarrow& 0\end{matrix}$$
 On triple overlaps, one gets $ \rho'_{\gamma\beta}\rho'_{\beta\alpha}\,=\,
\rho'_{\gamma \alpha}$ or $\rho'_{\gamma \alpha} \eta$, depending on whether the images of
the open subsets $U_\alpha\, , U_\beta\, , U_\gamma$ are arranged cyclically or not in
$X$. Now, if we suppose that the curve $S$ is smooth, reduced, then $\mathcal L_\alpha$ 
will be line bundles. We thus have obtained a twisted line bundle over the spectral 
curve, and so we have the following:
 
\begin{prop}
Fixing the spectral curve, the family of twisted vector bundles is a torsor over the 
Picard variety of the spectral curve.
\end{prop}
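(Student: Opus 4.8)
The plan is to use the spectral correspondence set up in the preceding paragraphs to reduce the classification of twisted vector bundle triples on $\Sigma$ (with fixed spectral data) to the classification of twisted line bundles on the spectral curve $S$, and then to observe that the latter is governed by a cohomological argument exactly parallel to the rank-one case treated above. First I would fix the smooth reduced spectral curve $S\,\subset\,\Sigma\times\bbp^1$ cut out by $\det(G_\alpha(z)-\eta\bbi)\,=\,0$, together with the tautological section $\eta$; this is precisely the invariant data attached to the conjugacy class of the $G_\alpha$. Given a twisted vector bundle triple $(E_\alpha\, ,G_\alpha\, ,\rho_{\alpha\beta})$ with this spectral curve, the short exact sequences
$$0\,\longrightarrow\,\sigma^*E_\alpha\otimes{\mathcal O}(-1)\,\buildrel{G_\alpha(z)-\eta\bbi}\over{\longrightarrow}\,\sigma^*E_\alpha\,\longrightarrow\,{\mathcal L}_\alpha\,\longrightarrow\,0$$
produce line bundles ${\mathcal L}_\alpha$ on the portion of $S$ over $U_\alpha$, together with the induced maps $\rho'_{\beta\alpha}$ which, by the triple-overlap analysis already recorded, satisfy exactly the twisted cocycle identities defining a twisted line bundle on $S$. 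Conversely, the pushforward functor $\sigma_*$ reconstructs $(E_\alpha\, ,G_\alpha\, ,\rho_{\alpha\beta})$ from the twisted line bundle data on $S$: one recovers $E_\alpha\,=\,\sigma_*{\mathcal L}_\alpha$, $G_\alpha\,=\,\sigma_*(\times\eta)$, and the gluing maps from $\sigma_*\rho'_{\beta\alpha}$. So the first step is to check carefully that these two constructions are mutually inverse, up to the evident gauge equivalence on both sides; this is the content of the statement that "this encodes the pair $(E_\alpha\, ,G_\alpha)$."

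Next, granting this equivalence of categories, I would argue that the set of twisted line bundles on $S$ with fixed twisting data forms a torsor over $\mathrm{Pic}(S)$ by exactly the argument used above in rank one: given two solutions ${\mathcal L}_\alpha\, ,{\mathcal L}'_\alpha$ of the twisted line-bundle equations on $S$ with the same $\eta$-twist, the ratios $T_{\beta\alpha}\,=\,\rho'_{\beta\alpha}/\rho''_{\beta\alpha}$ are honest (untwisted) transition functions, because the $\eta$-factors in the twisted cocycle cancel in the quotient; hence they define a genuine line bundle on $S$, and conversely tensoring any one solution by a line bundle on $S$ produces another. One must also exhibit at least one solution, so the torsor is nonempty: this follows because the twisted-bundle triple we started from, when fed through the spectral construction, furnishes such an ${\mathcal L}_\alpha$, so the existence is automatic as soon as the category of twisted triples with the given spectral curve is nonempty — which it is, since $S$ by definition arose from one.

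The main obstacle I anticipate is the gauge-theoretic bookkeeping in the first step — verifying that the correspondence is an equivalence and not merely a bijection of isomorphism classes. Concretely: the $\rho'_{\beta\alpha}$ are only defined up to the automorphisms of the ${\mathcal L}_\alpha$ induced by automorphisms of $E_\alpha$ commuting with $G_\alpha$, and when $S$ is smooth and reduced these commuting automorphisms are exactly the invertible functions on $S$ over $U_\alpha$, i.e. the automorphisms of the line bundle ${\mathcal L}_\alpha$. Pinning down this matching of automorphism groups is what makes the torsor statement clean; without it one would only get a surjection of torsors rather than an isomorphism. A secondary technical point is ensuring that the pushforward $\sigma_*{\mathcal L}_\alpha$ really has rank $n$ and that $\sigma_*(\times\eta)$ has $S$ as its spectral curve, which is standard (it uses that $\sigma$ restricted to $S$ is finite flat of degree $n$ because $S$ is a divisor of bidegree $(n,\ast)$ meeting each fiber $\{z\}\times\bbp^1$ in $n$ points counted with multiplicity) but should be stated. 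Once these are in place, the proposition follows: fixing the spectral curve, the twisted vector bundle triples are in bijection with twisted line bundles on $S$, and the latter form a torsor over $\mathrm{Pic}(S)$.
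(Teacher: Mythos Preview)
Your proposal is correct and follows essentially the same approach as the paper: reduce twisted vector bundle triples with fixed spectral curve to twisted line bundles on $S$ via the spectral construction, then invoke the rank-one torsor result for $\mathrm{Pic}(S)$. The paper in fact states the proposition as an immediate consequence of the preceding paragraphs and gives no further argument; your write-up supplies considerably more detail (the inverse pushforward, the matching of automorphism groups, nonemptiness) than the paper does, but the underlying strategy is identical.
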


Of course, here, if we want Hermite-Einstein monopoles, one must worry about stability. 
One advantage of the spectral curve approach is that if the spectral curve is 
irreducible and reduced, there are no subobjects, as there are no sub-spectral curves.

\subsection{Gerbe-like structure}

We close this section with the comment that the data in the meromorphic bundle 
structure on $X$ induces a structure which rather resembles Murray's bundle gerbes. 
Indeed, if $X^{[2]}\,\longrightarrow\, \Sigma$ is the fiber product of $X$ with itself, we have 
a $\bbz$-fold cover $\widetilde X^{[2]}$ of $X^{[2]}$, given as pairs of points $x\, ,y$ 
in the fiber over $X$ plus a homotopy class of paths from $x$ to $y$ along the fiber. 
(The inverse image of a point in $\Sigma$ would thus be $S^1\times \bbr$.) Given a 
bundle $E$ on $X$, there is a natural bundle $Hom_E$ on $X^{[2]}$ and by lifting on 
$\widetilde X^{[2]}$, given over $(x\, ,y)$ by $Hom(E_x,\, E_y)$. This has natural maps 
$Hom(E_x,\, E_y)\otimes Hom(E_y,\, E_z)\,\longrightarrow\, Hom(E_x,\, E_z)$, and this is
one of the essential properties of a bundle gerbe, defined by Murray in the rank one case. Our 
remark is that in our case we have a natural section of $Hom_E$, given by our 
integrating $\nabla^c$ over $\widetilde X^{[2]}$ along the fibers, and this section 
respects the multiplication, so that $s(x,y) \times s(y,z)\,= \,s(x,z)$.
 
For our meromorphic bundles, parallel transport by $\nabla^c_\xi(s)\,=\,0$ along the $S^1$ 
preserves the eigenspaces of the holonomy, and so the kernel of the difference of the 
monodromy and any multiple $\eta \bbi$ of the identity map. If $\pi\,:\, S\,\longrightarrow
\,\Sigma$ is the spectral curve, taking the fiber product $Y\,=\, S\times_\Sigma X\,\subset 
\,\bbp^1\times X$. There is a well defined line bundle $\mathcal L$, and so another line 
bundle $Hom_{\mathcal L}$ along $Y^{[2]}$, again equipped with a natural section when 
one lifts to the $\bbz$-cover $\widetilde Y^{[2]}$.

\section{Equivariant bundles on regular Sasakians}

In this section we will reduce the study of holomorphic vector bundles on quasiregular
Sasakians to that of holomorphic vector bundles on regular Sasakians.

Let $X$ be a quasiregular Sasakian threefold. The map $\pi$ in \eqref{e1} fails to be a
submersion outside  finitely many  points of $\Sigma$. Let $x_1\, ,\cdots\, ,x_m$ be the
points of $\Sigma$ such that the complement
$$
\Sigma_0\,:=\, \Sigma\setminus \{x_1\, ,\cdots\, ,x_m\}
$$
satisfies the condition that the restriction
$$
\pi\vert_{\pi^{-1}(\Sigma_0)}\, :\,\pi^{-1}(\Sigma_0)\,\longrightarrow\,\Sigma_0
$$
is a submersion.
For any $z\, \in\, \pi^{-1}(x_i)$, $1\,\leq\, i\, \leq\, m$, the isotropy of $z$ for the
action of $S^1$ on $X$ constructed using the Reeb vector field is a nontrivial finite
cyclic group. Let $\nu_i$ be the order the isotropy subgroup of $z\,\in\,
\pi^{-1}(x_i)$.

We assume the following:

At least one of the following four conditions hold:
\begin{enumerate}
\item $\text{genus}(\Sigma)\, \geq\, 1$,

\item $m\, \geq\, 3$,

\item if $\text{genus}(\Sigma)\, =\, 0$ and $m\,=\, 2$, then $\nu_1\,=\, \nu_2$, and

\item $\text{genus}(\Sigma)\, =\, 0$ and $m\,=\, 0$.
\end{enumerate}

We now recall a theorem of Bundgaard--Nielsen--Fox; see \cite[p. 29, Theorem 1.2.15]{Nam}
and \cite[p. 26, Proposition 1.2.12]{Nam}.

\begin{thm}[Bundgaard--Nielsen--Fox]\label{BNF}
There is a finite Galois covering
$$
\delta\, :\, \widetilde{\Sigma}\, \longrightarrow\, \Sigma
$$
such that
\begin{itemize}
\item $\delta$ is unramified over the complement $\Sigma_0$,

\item for any $1\, \leq\, i\, \leq\, m$, the isotropy of any $y\, \in\, \delta^{-1}(x_m)$
is a cyclic group of order $\nu_i$.
\end{itemize}
\end{thm}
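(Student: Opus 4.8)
The plan is to construct $\widetilde\Sigma$ as an orbifold (stacky) cover and then invoke a uniformization-type statement. First I would regard $\Sigma$ together with the ramification data $(x_i,\nu_i)$ as an orbifold Riemann surface $\Sigma^{\mathrm{orb}}$, and set out to produce a finite cover $\widetilde\Sigma\to\Sigma$, unramified over $\Sigma_0$, whose pullback orbifold structure is trivial (i.e.\ the local isotropy $\bbz/\nu_i$ is killed by passing to the cover). Concretely, one wants a finite-index subgroup of the orbifold fundamental group $\pi_1^{\mathrm{orb}}(\Sigma^{\mathrm{orb}})$ that is torsion-free and normal; the corresponding cover is then a genuine smooth surface and the covering is Galois. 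So the key object is the group
$$
\Gamma\,=\,\pi_1^{\mathrm{orb}}(\Sigma^{\mathrm{orb}})\,=\,\Big\langle a_1,b_1,\ldots,a_g,b_g,\,c_1,\ldots,c_m \;\Big|\; c_i^{\nu_i}=1,\ \prod_j[a_j,b_j]\prod_i c_i=1\Big\rangle ,
$$
with $g=\text{genus}(\Sigma)$, and the task reduces to finding a torsion-free normal subgroup of finite index.

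The key steps, in order: (1) Write down the presentation of $\Gamma$ above. (2) Identify all torsion in $\Gamma$: by the structure theory of Fuchsian groups (or, elementarily, by the normal-form argument for such one-relator-with-torsion groups), every finite-order element is conjugate into one of the cyclic subgroups $\langle c_i\rangle$. (3) Produce a finite quotient $q:\Gamma\to F$ onto a finite group $F$ that is \emph{injective on each $\langle c_i\rangle$}, i.e.\ $q(c_i)$ has order exactly $\nu_i$; then $\ker q$ is torsion-free (any torsion element is conjugate to a power of some $c_i$, and is killed only if that power is trivial), normal, and of finite index, so $\widetilde\Sigma\to\Sigma$ can be taken to be the associated cover, which is Galois with group $F$. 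The local isotropy of $y\in\delta^{-1}(x_i)$ is exactly the image of $\langle c_i\rangle$ in $F$, which has order $\nu_i$, giving the second bullet; unramifiedness over $\Sigma_0$ is automatic since the $c_i$ are the only relations beyond the surface relation. (4) Handle the construction of $q$ case-by-case according to the four hypotheses. When $g\ge 1$ one has enough "room" in the free part to map onto, e.g., a suitable symmetric or dihedral group with the $c_i$ sent to elements of the right orders; when $m\ge 3$ one can appeal to the classical Bundgaard--Nielsen and Fox arguments (this is literally the content cited from \cite{Nam}); the case $g=0,\ m=2,\ \nu_1=\nu_2=\nu$ is the "football" orbifold, uniformized by the cyclic cover $z\mapsto z^\nu$ of $\bbp^1$; the case $g=0,\ m=0$ is $\bbp^1$ itself, where $\delta=\id$ works.

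Since Theorem~\ref{BNF} is attributed to Bundgaard--Nielsen--Fox and referenced to \cite[Theorem 1.2.15]{Nam} and \cite[Proposition 1.2.12]{Nam}, the honest "proof" is a citation; my proposal above is the reconstruction one would give if forced to be self-contained. The main obstacle in a self-contained treatment is step (3)–(4): \emph{separately realizing the torsion}, i.e.\ producing one finite group $F$ into which $\Gamma$ surjects so that no $c_i$ acquires a smaller order. The bad cases excluded by the four hypotheses — $(g,m)=(0,1)$, $(0,2)$ with $\nu_1\ne\nu_2$, and the small spherical triangle-group-type exceptions — are precisely the "bad orbifolds" and the cases where the orbifold Euler characteristic forces the group to be finite or otherwise obstructs a torsion-faithful finite quotient; verifying that the four listed conditions exactly avoid these, and constructing $F$ in each remaining case, is where the work lies. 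For the purposes of this paper I would simply record the statement with the citation, noting that the four conditions are exactly those under which the Fox/Bundgaard--Nielsen uniformization applies, and move on to using $\delta$ to pull back the Sasakian structure to a regular one.
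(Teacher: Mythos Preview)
The paper does not prove Theorem~\ref{BNF}; it merely records the statement and cites \cite[Theorem 1.2.15 and Proposition 1.2.12]{Nam}. You correctly identify this, and your proposed treatment --- state it and cite Namba --- matches the paper exactly.

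Your supplementary sketch of the actual Bundgaard--Nielsen--Fox argument is sound in outline (presentation of $\pi_1^{\mathrm{orb}}$, torsion is conjugate into the $\langle c_i\rangle$, construct a finite quotient faithful on each $\langle c_i\rangle$, take the kernel as the cover), but there is one slip in your case analysis. You list ``small spherical triangle-group-type exceptions'' among the bad cases excluded by the four hypotheses. They are not excluded: $g=0$, $m=3$ falls under condition~(2), and indeed the theorem holds there --- when $\Gamma$ is a finite (spherical) triangle group one simply takes $F=\Gamma$ and $q=\id$, so the group being finite is a help rather than an obstruction. The only genuinely bad orbifolds are the teardrop $(g,m)=(0,1)$ and the asymmetric football $(g,m)=(0,2)$ with $\nu_1\neq\nu_2$, which are precisely what the four conditions rule out. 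This does not affect your conclusion, since for the purposes of this paper the citation suffices.
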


Let $\Gamma\,:=\, \text{Gal}(\delta)$ be the Galois group of the covering $\delta$.

Consider the fiber product $\widetilde{\Sigma}\times_\Sigma X$. Although it is singular,
it has a natural desingularization $\widetilde{X}$ such that the natural projection
$$
\widetilde{\delta}\, :\, \widetilde{X}\, \longrightarrow\, X
$$
is a ramified Galois covering with Galois group $\Gamma$, and the natural projection
$$
\widetilde{\pi}\, :\, \widetilde{X}\, \longrightarrow\, \widetilde{\Sigma}
$$
defines a principal $S^1$--bundle. Therefore, $\widetilde{X}$ is a
regular Sasakian manifold.

Holomorphic vector bundles on $X$ are identified with $\Gamma$--equivariant holomorphic
vector bundles on $X$. This correspondence preserves semistability and polystability,
because equivariant polystability (respectively, equivariant semistability) coincides
with usual polystability (respectively, semistability). Also, all our constructions of twisted
bundle triples go over; one simply then has a condition of equivariance that gets added to the mix.

\end{document}